\newcommand{\univ}{\mathcal{U}}
\newcommand{\allsets}{\mathcal{S}}
\newcommand{\setcovi}{\mathcal{X}^{I}}
\newcommand{\pathh}[1]{PH_{#1}}  
\newcommand{\source}[1]{s(#1)}
\newcommand{\comp}[1]{C_{#1}}
\newcommand{\seti}[1]{I_{#1}}
\newcommand{\setu}[1]{U_{#1}}
\newcommand{\setup}[1]{U'_{#1}}
\newcommand{\setv}[1]{V_{#1}}
\newcommand{\setvp}[1]{V'_{#1}}
\newcommand{\setz}{Z}
\newcommand{\veru}[2]{u_{#1,#2}}
\newcommand{\veruc}[1]{u_{#1}}
\newcommand{\verup}[2]{u'_{#1,#2}}
\newcommand{\verv}[2]{v_{#1,#2}}
\newcommand{\vervp}[2]{v'_{#1,#2}}
\newcommand{\st}{\hspace{0.1cm}\bigl|\bigr.\hspace{0.1cm}}
\newcommand{\cA}{\mathcal{A}}
\newcommand{\cV}{\mathcal{V}}
\newcommand{\cS}{\mathcal{S}}
\newcommand{\cW}{\mathcal{W}}
\newcommand{\cU}{\mathcal{U}}
\newcommand{\cF}{\mathcal{F}}
\newcommand{\cB}{\mathcal{B}}
\newcommand{\problemACT}{\textup{ACT}}
\newcommand{\problemCMR}{\textup{ST}}
\newcommand{\problemAllCMR}{\textup{All-ST}}
\newcommand{\problemCMRk}{\textup{STU}}
\newcommand{\problemCMRmin}{\textup{$\min$-ST}}
\newcommand{\problemCMRmax}{\textup{$\max$-ST}}
\newcommand{\problemTPE}{\textup{TPE}}
\newcommand{\problemSC}{\textup{SC}}
\newcommand{\card}[1]{\lvert #1\rvert}
\newcommand{\natplus}{\mathbb{N}^+}
\newcommand{\outN}[2]{N^{-}_{#1}(#2)}
\newcommand{\inN}[2]{N^{+}_{#1}(#2)}
\newtheorem{theorem}{Theorem}[section]
\newtheorem{lemma}{Lemma}[section]
\newtheorem{corollary}{Corollary}[section]
\begin{document}

\renewcommand{\thefootnote}{\fnsymbol{footnote}}

\title{\textsc{The Snow Team Problems
\\[2mm] {\large --- Clearing Directed Subgraphs by Mobile Agents ---\\~}
}}

\renewcommand\Affilfont{\small}

\author[1]{Dariusz Dereniowski}
\author[2]{Andrzej Lingas}
\author[1]{\authorcr Dorota Osula}
\author[3]{Mia Persson}
\author[4]{Pawe\l {} \.Zyli{\'n}ski}

\affil[1]{Faculty of Electronics, Telecommunications and Informatics\authorcr Gda\'nsk University of Technology, 80-233 Gda\'nsk, Poland}
\affil[2]{Department of Computer Science, Lund University, 221 00 Lund, Sweden}
\affil[3]{Department of Computer Science, Malm\"o University, 205 06 Malm\"o, Sweden}
\affil[4]{Institute of Informatics, University of Gda\'nsk, 80-309 Gda\'nsk, Poland}

\date{}
\maketitle
\thispagestyle{empty}

\begin{abstract}
We study several problems of clearing subgraphs by mobile agents in digraphs. The agents can move only along directed walks of a digraph and, depending on the variant, their initial positions may be pre-specified. In general, for a given subset~$\cS$ of vertices of a digraph $D$ and a positive integer $k$, the objective is to determine whether there is a subgraph $H=(\cV_H,\cA_H)$ of $D$ such that (a) $\cS \subseteq \cV_H$, (b) $H$ is the union of $k$ directed walks in $D$, and (c) the underlying graph of $H$ includes a Steiner tree for $\cS$ in $D$. We provide several results on the polynomial time tractability, hardness, and parameterized complexity of the problem.
\end{abstract}

\vspace*{0.2cm}

{\bf keywords:} graph searching, FPT-algorithm, NP-hardness, monomial

\section{Introduction}\label{sec:Introduction}

Consider a city, after a snowstorm, where all streets have been buried in snow completely, leaving a number of facilities disconnected. For snow teams, distributed within the city, the main battle is usually first to re-establish connectedness between these facilities. This motivates us to introduce a number of (theoretical) {\em snow team} problems in graphs. Herein, in the introduction section, we shall formalize only one of them, leaving the other variants to be stated and discussed subsequently.  

Let $D=(\cV,\cA,F,B)$ be a vertex-weighted digraph of order $n$ and size $m$, with two vertex-weight functions $F \colon \cV \rightarrow \{0,1\}$ and $B \colon \cV \rightarrow \mathbb{N}$, such that its underlying graph is connected. (Recall that the \emph{underlying graph} of $D$ is a simple graph with the same vertex set and its two vertices $u$ and $v$ being adjacent if and only if there is an arc between $u$ and $v$ in $D$.) The snow problem is modeled by $D$ as follows. The vertices of~$D$ correspond to street crossings while its arcs correspond to (one-way) streets, the set $\cF=F^{-1}(1)$ corresponds to locations of facilities, called also from now on {\em terminals}, and the set $\cB=B^{-1}(\natplus)$ corresponds to vertices, called from now on {\em snow team bases}, where a (positive) number of snow ploughs is placed (so we shall refer to the function $B$ as a~{\em plough-quantity} function). Let $\mathbf{k}_B=\sum_{v \in \cV}B(v)$ be the total number of snow ploughs placed in the digraph. 

\begin{description}
\item[The Snow Team problem] (\problemCMR)\\[2mm]  
Do there exist $\mathbf{k}_B$ directed walks in~$D$, with exactly $B(v)$ starting points at each vertex $v \in \cV$, whose edges induce a subgraph $H$ of $D$ such that all vertices in $F^{-1}(1)$ belong to one connected component of the underlying graph of $H$?
\end{description}

The \problemCMR\ problem may be understood as a question, whether for $\mathbf{k}_B$ snow ploughs, initially located at snow team bases in $\cB=B^{-1}(\natplus)$, where the number of snow ploughs located at $v\in \cB$ is equal to $B(v)$, it is possible to follow $\mathbf{k}_B$ walks in $D$ clearing their arcs so that  the underlying graph of the union of cleared walks includes a Steiner tree for all facilities in~$F^{-1}(1)$. 

\paragraph{Related work.} The Snow Team problem is related to the problems of clearing connections by mobile agents placed at some vertices in a digraph, introduced by Levcopoulos et al.\ in~\cite{ClearingConnectionsByFewAgents}. In particular, the \problemCMR\ problem is a generalized variant of the Agent Clearing Tree (ACT) problem where one wants to determine a placement of the minimum number of mobile agents in a digraph~$D$ such that agents, allowed to move only along directed walks, can simultaneously clear some subgraph of $D$ whose underlying graph includes a spanning tree of the underlying graph of~$D$. In~\cite{ClearingConnectionsByFewAgents}, the authors provided a~simple 2-approximation algorithm for solving the~Agent Clearing Tree problem, leaving its complexity status open.

All the aforementioned clearing problems are variants of the path cover problem in digraphs, where the objective is to find a minimum number of directed walks that cover all vertices (or edges) of a given digraph. Without any additional constraints, the problem was shown to be polynomially tractable by Ntafos and Hakimi in~\cite{NtafosHakimi79}. Several other variants involve additional constraints on walks as the part of the input, see~\cite{Beerenwinkel15, Gabow76,Kolman09,Kovac13,Ntafos1984, NtafosHakimi79,NtafosHakimi81} to mention just a few, some of them combined with relaxing the condition that all vertices of the digraph have to be covered by walks. In particular, we may be interested in covering only a given set of walks that themselves should appear as subwalks of some covering walks (polynomially tractable~\cite{NtafosHakimi79}). We may also be interested in covering only a given set of vertex pairs, where both elements of a pair should appear in the same order and in the same path in a solution (NP-complete even in acyclic digraphs~\cite{NtafosHakimi81}). Finally, for a given family $\cS$ of vertex subsets of $D$, we may be
interested in covering only a~representative from each of the subsets (NP-complete~\cite{NtafosHakimi81}). 

A wider perspective locates our snow team problems as variants of graph searching problems.
The first formulations by Parsons~\cite{Parsons76} and Petrov~\cite{Petrov82} of the first studied variant of these problems, namely the \emph{edge search}, were inspired by a work of Breisch~\cite{Breisch67}.
In~\cite{Breisch67}, the problem was presented as a search (conducted by a team of agents/rescuers) of a person lost in a system of caves.
The differences between the problem we study in this work and the edge search lies in the fact that in edge search the entity that needs to be found (usually called the \emph{fugitive}) changes its location quickly while our case can be interpreted as a search for an entity that is static. Also in the edge search, an agent can be removed from the graph and placed on any node (which is often referred to as \emph{jumping}) while in our problem it needs to follow a directed path.
A variant of the edge search that shares certain characteristics with the problem we study is the \emph{connected search}. In the latter, the connectivity restriction is expressed by requiring that at any time point, the subgraph that is ensured not to contain the fugitive is connected; for some recent algorithmic and structural results see e.g.~\cite{BFFFNST12,BestGTZ16,Dereniowski11,Dereniowski12}.
We also remark a different cleaning problem introduced in~\cite{MessingerNP08} and related to the variants we study: cleaning a graph with \emph{brushes} --- for some recent works, see e.g.~\cite{BorowieckiDP14,BryantFGPP14,GaspersMNP10,MessingerNP11}.
(Two restrictions from the original problem of cleaning a graph with brushes, namely, enforcing that each edge is traversed once and each cleaning entity must follow a walk in the graph appear also in a variant of edge search called the \emph{fast search}~\cite{DyerYY08}.) All aforementioned searching games are defined for simple graphs; for some works on digraphs, see e.g.~\cite{AlspachDHY07,AmiriKKRS15,BerwangerDHKO12,Oliveira16,HunterK08}.

Finally, the $\problemCMR$ problem is related to the directed Steiner tree problem, where for a~given edge-weighted directed graph $D=(\cV,\cA)$, a root $r\in\cV$ and a set of terminals $X\subseteq\cV$, the objective is to find a minimum cost arborescence rooted at $r$ and spanning all terminals in $X$ (equivalently, there exists a directed path from $r$ to each terminal in $X$) \cite{CharikarCCDGGL99,ZosinK02}.
For some recent works and results related to this problem, see e.g.~\cite{AbdiFGKS16,FriggstadKKLST14,JonesLRSS13}.
We also point out to a generalization of the Steiner tree problem in which pairs of terminals are given as an input and the goal is to find a minimum cost subgraph which provides a~connection for each pair \cite{CharikarCCDGGL99,FeldmannM16}.
For some other generalizations, see e.g.~\cite{ChitnisEHKKS14,Laekhanukit16,Suchy16,WatelWBB14,WatelWBB16}. 

\paragraph{Our results.} We show that the Snow Team problem as well as some of its variants are fixed-parameter tractable. In particular, we prove that the \problemCMR\ problem admits a~fixed-parameter algorithm with respect to the total number $l$ of facilities and snow team bases, running in $2^{O(l)} \cdot \textup{poly}(n)$ time, where $\textrm{poly}(n)$ is a polynomial in the order $n$ of the input graph  (Section~\ref{sec:theCMRproblemFPT}). The proof relies on the algebraic framework introduced by Koutis in~\cite{Koutis08}. On the other hand, we show that the \problemCMR\ problem (as well as some of its variants) is NP-complete, by a reduction from the Set Cover problem~\cite{GareyJohnson79} (Section~\ref{sec:theCMRproblemNP}).  Our result on NP-completeness of the \problemCMR\ problem implies NP-completeness of the Agent Clearing Tree problem studied in~\cite{ClearingConnectionsByFewAgents}, where the complexity status of the latter has been posed as an open problem.

\paragraph{Remark.} Note that a weaker version of the \problemCMR\ problem with the connectivity requirement removed, that is, when we require each facility only to be connected to some snow team base, admits a polynomial-time solution by a straightforward reduction to the minimum path cover problem in directed graphs~\cite{NtafosHakimi79}.

\paragraph{Notation.} The set of all source vertices in a directed graph $D$ is denoted by $\source{D}$. For a directed walk $\pi$ in $D$, the set of vertices (arcs) of $\pi$ is denoted by $V(\pi)$ (resp.\ $A(\pi)$), and its length --- by $|\pi|$. For two directed walks $\pi_1$ and $\pi_2$ in $D$, where $\pi_2$ starts at the ending point of $\pi_1$, the~concatenation of $\pi_1$ and $\pi_2$ is denoted by $\pi_1 \circ \pi_2$. 

Observe that in a border case, all non-zero length walks of snow ploughs start at the same vertex of the input digraph $D=(\cV,\cA,F,B)$. Therefore, we may assume that the number of snow ploughs at any vertex is at most $n-1$, that is, $B(v) \le n-1$ for any $v \in \cV$, and so the description of any input  requires $O(n\log n+m)$ space (recall $m \ge n-1$).

\newcommand{\ACfinal}[2]{Q(#1,#2)}
\newcommand{\AC}[3]{Q_{#1,#2}(#3)}
\newcommand{\ACin}[3]{Q^{+}_{#1,#2}(#3)}
\newcommand{\ACout}[3]{Q^{-}_{#1,#2}(#3)}
\newcommand{\indf}[2]{\mathbf{z}_{#2}(#1)}

\section{The \problemCMR\ problem is fixed-parameter tractable} \label{sec:theCMRproblemFPT}

In this section, we prove that the Snow Team problem is fixed-parameter tractable with respect to the number of facilities and snow team bases. The proof relies on the key fact (see Lemmas~\ref{lem:ST-AllST} and~\ref{lem:reduction-transitive} below) that a restricted variant of the \problemCMR\ problem with the~input~$D$ can be reduced to the detection of a particular directed subtree of `small' order in the transitive closure $\textup{TC}(D)$ of $D$.
We solve the latter tree detection problem by a~reduction to the problem of testing whether some properly defined multivariate polynomial has a monomial with specific properties, essentially modifying the construction in~\cite{KoutisW16} designed for undirected trees/graphs.  

Let us consider the variant of the \problemCMR\ problem, which we shall refer to as the \problemAllCMR\ problem, where we restrict the input only to digraphs $D=(\cV,\cA,F,B)$ that satisfy $\cB=B^{-1}(\natplus) \subseteq \cF=F^{-1}(1)$. (In other words, 
snow team bases can be located only at some facilities.)  
Observe that $D$ admits a positive answer to the \problemCMR\ problem if and only if there exists a subset $\cB'$ of $\cB \setminus \cF$ such that the digraph $D'=(\cV,\cA,F',B')$, where $F'(v)=1$ for $v \in \cF \cup \cB'$ and $F'(v)=0$ otherwise, and $B'(v)=B(v)$ for $v \in \cB' \cup (\cF \cap \cB)$ and $B'(v)=0$ otherwise, admits a positive answer to the \problemAllCMR\ problem. Therefore, we can immediately conclude with the following lemma.

\begin{lemma}\label{lem:ST-AllST}
Suppose that the \problemAllCMR\ problem can be solved in $2^{O(k)} \cdot \textup{poly}(n)$ time, where $k$ is the number of facilities in the input $($restricted$)$ digraph of order $n$. Then, the \problemCMR\ problem can be solved in $2^{O(l)} \cdot \textup{poly}(n)$ time, where $l$ is the total number of facilities and snow team bases in the input digraph of order $n$.\qed
\end{lemma}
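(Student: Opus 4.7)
The plan is to formalize the equivalence sketched in the paragraph immediately preceding the lemma, and then argue that a brute-force enumeration over subsets $\cB'\subseteq \cB\setminus \cF$ delivers the claimed running time.

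The forward direction of the equivalence is the only nontrivial part. Starting from a \problemCMR\ solution on $D$ with induced subgraph $H$ and with all facilities lying in a single connected component $C$ of the underlying graph of $H$, I would invoke the observation that the arcs traversed by any single walk form a subgraph that is connected in the underlying sense, so every walk is either entirely inside $C$ or entirely disjoint from $C$. Walks of the latter kind can be replaced by zero-length walks at their respective starting vertices, relying on the convention that a single vertex is a valid walk. I would then define $\cB'$ to be the set of $v\in \cB\setminus \cF$ having at least one surviving walk of positive length; each such $v$ necessarily lies in $C$. The instance $D'$ built from this $\cB'$ as prescribed in the lemma statement then admits a positive answer to \problemAllCMR, witnessed by the surviving walks, which start at vertices of $\cB'\cup(\cF\cap \cB)$ and connect all vertices of $\cF\cup \cB'$ in $C$.

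The backward direction is straightforward: given an \problemAllCMR\ solution for $D'$, I would extend it by appending $B(v)$ zero-length walks at each $v\in \cB\setminus(\cF\cup \cB')$. The induced subgraph is unchanged and still connects all facilities of $D$, since $\cF\subseteq \cF\cup \cB'$. For the running time, there are at most $2^{|\cB\setminus \cF|}\le 2^l$ choices of $\cB'$, and for each choice $D'$ contains $|\cF\cup \cB'|\le l$ facilities, so the assumed $2^{O(k)}\cdot\textup{poly}(n)$ algorithm decides each resulting \problemAllCMR\ instance in $2^{O(l)}\cdot\textup{poly}(n)$ time. Summing gives $2^l\cdot 2^{O(l)}\cdot\textup{poly}(n)=2^{O(l)}\cdot\textup{poly}(n)$, as required.

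The only step I expect to require real care is the walk-pruning argument in the forward direction: I must be sure that each walk lies entirely in a single connected component of the underlying graph of $H$, so that discarding non-contributing walks (and replacing them by trivial ones) neither changes $H$ inside $C$ nor disturbs the base-count constraints. Once that is settled, every remaining step is bookkeeping and introduces no new ideas.
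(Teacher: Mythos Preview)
Your proposal is correct and follows exactly the approach the paper intends: the paper's own ``proof'' is just the observation in the paragraph preceding the lemma (the lemma is stated with a \qed\ and no further argument), and you are simply filling in the details of that observation together with the obvious $2^{|\cB\setminus\cF|}$ enumeration. Your care about the walk-pruning step is appropriate but not delicate---since every walk's arc set is connected in the underlying graph, each walk lies in a single component of $H$, and discarding (i.e., shrinking to length zero) the walks outside $C$ removes no arcs from $C$; the rest is indeed bookkeeping.
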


Taking into account the above lemma, we now focus on constructing an efficient fixed-parameter algorithm for the \problemAllCMR\ problem, with the \emph{restricted} input digraph $D=(\cV,\cA,F,B)$ satisfying $\cB=B^{-1}(\natplus) \subseteq \cF=F^{-1}(1)$. Let $\cW$ be a set of walks (if any) that constitute a positive answer to the \problemAllCMR\ problem in $D$. We say that $\cW$ is {\em tree-like} if all walks in $\cW$ are strongly arc-distinct, that is, they are arc-distinct and if there is a walk in $\cW$ traversing the arc $(u,v) \in \cA$, then there is no walk in $\cW$ traversing its complement $(v,u)\in \cA$, and the underlying graph of their union is acyclic and includes a Steiner tree for $\cF$. Notice that if $\cW$ is tree-like, then all walks in $\cW$ are just (simple) paths.    

\begin{lemma}\label{lem:reduction-transitive}
A $($restricted$)$ instance $D=(\cV,\cA,F,B)$ admits a positive answer to the~\problemAllCMR\ problem if and only if the transitive closure $\textup{TC}(D)=(\cV,\cA',F,B)$ of $D$, with the same vertex-weight functions $F$ and $B$, admits a positive answer to the \problemAllCMR\ problem with a~tree-like set of walks whose underlying graph is of order at most $2|\cF|-1$.
\end{lemma}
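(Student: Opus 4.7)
The plan is to establish both directions of the equivalence.

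Backward direction ($\Leftarrow$). Given a tree-like solution $\cW'$ in $\textup{TC}(D)$, I would replace every arc $(u,v)$ used by some walk of $\cW'$ with a directed $u$-to-$v$ path in $D$, which exists by definition of the transitive closure. This yields walks in $D$ with the same starting vertices and multiplicities as $\cW'$, and the underlying graph of their union contains that of $\cup \cW'$, hence contains a Steiner tree for $\cF$. So $D$ admits a positive answer to \problemAllCMR.

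Forward direction ($\Rightarrow$). Given a solution $\cW$ in $D$, let $H$ be the union of the walks in $\cW$ and pick a Steiner tree $T_0$ for $\cF$ in the underlying graph of $H$, minimized so that every leaf of $T_0$ is a facility. I would then iteratively reduce $T_0$ inside $\textup{TC}(D)$: whenever a non-facility vertex $v$ has degree $2$ in the current tree with neighbors $u,w$ and $\{u,w\}$ is an edge of the underlying graph of $\textup{TC}(D)$, replace $u$-$v$-$w$ by $u$-$w$ and delete $v$. Let $T$ denote the resulting tree. A standard handshaking argument --- all leaves of $T$ are in $\cF$ and every remaining non-facility internal vertex has degree at least $3$ --- then yields $|V(T)|\leq 2|\cF|-1$. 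Finally, I would realize $T$ as the underlying graph of a tree-like set of walks in $\textup{TC}(D)$ starting at the bases with multiplicities $B$: each edge of $T$ is oriented using an arc of $\textup{TC}(D)$, and the resulting DAG is decomposed into directed paths from the bases, the original $\cW$ witnessing that every vertex of $T$ is reachable in $D$ from some base and hence that such a decomposition exists.

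The hardest part will be the shortcutting step together with the bound $|V(T)|\leq 2|\cF|-1$. The subtle case is a non-facility degree-$2$ vertex $v$ whose two $D$-arcs both enter $v$ (symmetrically, both leave $v$): in that configuration, no direct shortcut edge $\{u,w\}$ is guaranteed in $\textup{TC}(D)$. My plan is to handle this by using the two walks of $\cW$ that witness the two arcs incident to $v$: either one of the witnessing bases already reaches both sides in $D$ (so a shortcut through that base is available), or $v$ can be swapped for such a base, which is itself a facility and hence does not inflate the vertex count beyond the $2|\cF|-1$ budget. The subsequent decomposition of the oriented $T$ into walks starting at the bases with the prescribed multiplicities reduces to a standard path cover problem on a DAG with fixed sources, whose feasibility follows directly from the existence of the original cover $\cW$.
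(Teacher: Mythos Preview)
Your backward direction is fine and matches the paper. The forward direction, however, has two genuine gaps that the paper's argument avoids by design.

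\textbf{Shortcutting.} You extract a Steiner tree $T_0$ from the underlying graph of $H$ and then try to contract non-facility degree-$2$ vertices. As you notice, if $v$ has neighbours $u,w$ with both witnessing arcs pointing into $v$ (or both out of $v$), there is no reason for $\{u,w\}$ to be an edge of the underlying graph of $\textup{TC}(D)$. Your proposed fix---``swap $v$ for a base that reaches both sides''---does not work in general: the base $b$ that reaches $u$ (say, the start of the walk containing $(u,v)$) need have no directed connection to $w$ in either direction, and replacing $v$ by $b$ does not preserve the tree structure unless $b$ is adjacent in $\textup{TC}(D)$ to both $u$ and $w$, which is exactly what is unavailable. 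The paper sidesteps this entirely: it never extracts a tree first. It keeps the $\mathbf{k}_B$ walks and shortcuts \emph{along a single walk}, replacing consecutive arcs $(v_t,v_{t+1}),(v_{t+1},v_{t+2})$ by $(v_t,v_{t+2})$, which is always an arc of $\textup{TC}(D)$. After this first pass the set is tree-like; the second pass removes non-terminal degree-$\le 2$ vertices again by shortcutting along a walk, and degree-$2$ non-terminals that are simultaneous endpoints of two paths are explicitly allowed to remain (the $2|\cF|-1$ bound is then obtained by an induction that accounts for them).

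\textbf{Realizing $T$ by walks from the bases.} Even granting a small tree $T$, your last step---orient $T$ and decompose it into directed paths starting at the bases with multiplicities $B$---is not justified by ``every vertex of $T$ is reachable from some base''. A minimal arc-disjoint path cover of an oriented tree needs $\max\{0,\deg_{\textup{out}}(v)-\deg_{\textup{in}}(v)\}$ paths starting at each $v$; nothing in your construction forces this quantity to be $\le B(v)$. The paper never faces this problem because it \emph{maintains} the $\mathbf{k}_B$ walks with their original starting vertices throughout all modifications; the correct multiplicities are preserved by construction rather than recovered at the end.

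In short, the paper's key idea is to transform the walks themselves inside $\textup{TC}(D)$, so that both the shortcut arcs and the base multiplicities come for free. Separating ``build the tree'' from ``cover it with walks'' loses exactly the information needed for both steps.
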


\noindent Since the transitive closure $\textup{TC}(D)=(\cV,\cA',F,B)$ inherits the functions $F$ and $B$ from the restricted instance $D$, we emphasize that $\textup{TC}(D)$ is a proper (restricted) instance to the \problemAllCMR\ problem.    

\begin{proof} 
($\Leftarrow$) It follows from the fact that a directed walk in the transitive closure $\textup{TC}(D)$ corresponds to a directed walk in~$D$. 

\medskip\noindent
($\Rightarrow$) Assume that the snow ploughs initially located at vertices in $\cB$, with respect to the plough-quantity function $B$, can simultaneously follow $\mathbf{k}_B$ directed walks $\pi_1,\ldots,\pi_{\mathbf{k}_B}$ whose edges induce a subgraph $H$ of $D$ such that the underlying graph of $H$ includes a Steiner tree of $\cF$. Consider now the same walks in the transitive closure $\textup{TC}(D)$. To prove the existence of a tree-like solution of `small' order, the idea is to transform these $\mathbf{k}_B$ walks (if ever needed) into another strongly arc-disjoint $\mathbf{k}_B$ walks. The latter walks have the same starting points as the original ones (thus preserving the plough-quantity function~$B$) and the underlying graph of their union is a Steiner tree of $\cF$ (in the underlying graph of $\textup{TC}(D)$) having at most $|\cF|-1$ non-terminal vertices. 

\begin{figure}
\begin{center}

\centering
\includegraphics[scale=0.9]{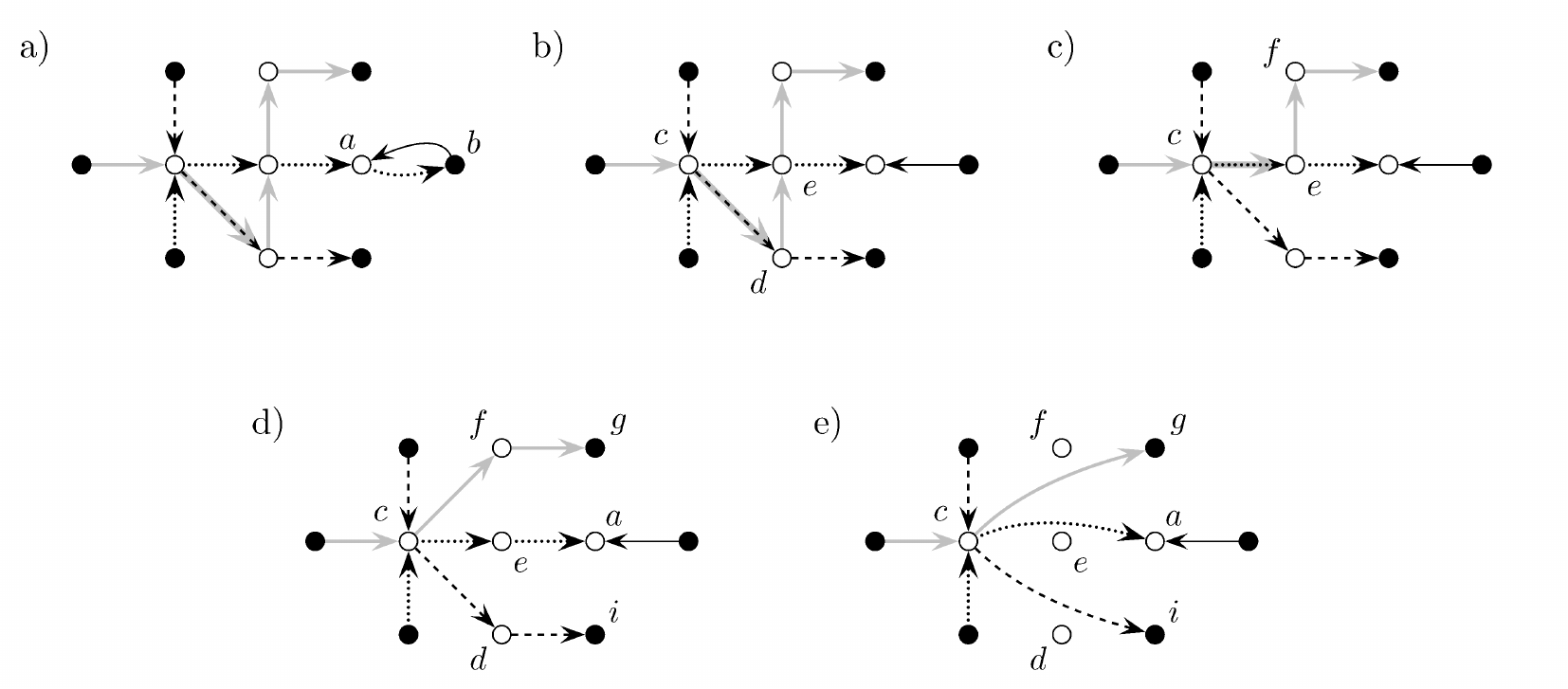}

\caption{Transforming walks into a tree-like set of walks. In the first step  (a-d), applied three times, we first delete the arc $(a,b)$, then replace two arcs $(c,d)$ and $(d,e)$ with the arc $(c,e)$, and then two arcs $(c,e)$ and $(e,f)$ with the arc $(c,f)$. In the second step (d-e), we replace two arcs $(c,d)$ and $(d,i)$,  two arcs $(c,e)$ and $(e,a)$, and two arcs $(c,f)$ and $(f,g)$, with the arcs $(c,i)$, $(c,a)$, and $(c,g)$, respectively.}\label{fig:2steps}
\end{center}
\end{figure}

Our transforming process is based on the following 2-step modification. First, assume without loss of generality that the walk $\pi_1=(v_1,\ldots, v_{|\pi_1|+1})$ has an arc $(v_t,v_{t+1})$  corresponding to an edge in the underlying graph $H$ of $\bigcup_{i=1}^{\mathbf{k}_B} \pi_i$ that belongs to a cycle (in $H$), or both $(v_t,v_{t+1})$ and its complement $(v_{t+1},v_t)$ are traversed by $\pi_1$, or there is another walk traversing $(v_t,v_{t+1})$ or $(v_{t+1},v_t)$, see Figure~\ref{fig:2steps} for an illustration. If $t=|\pi_1|$, then we shorten $\pi_1$ by deleting its last arc $(v_t,v_{t+1})$. Otherwise, if $t < |\pi_1|$, then we replace the arcs $(v_t,v_{t+1})$ and $(v_{t+1},v_{t+2})$ in $\pi_1$ with the arc $(v_{t},v_{t+2})$. One can observe that the underlying graph of the new set of walks is connected, includes a Steiner tree of $\cF$, and the vertex $v_1$ remains the starting vertex of (the new) $\pi_1$. But, making a walk cycle-free or strongly arc-disjoint may introduce another cycle in the underlying graph, or another multiply traversed arc,  or another arc $a$ such that both $a$ and its complement are traversed. However, the~length of the modified walk always decreases by one. Consequently, since the initial walks are of the finite lengths, we conclude that applying the above procedure multiple times eventually results in a tree-like set $\Pi=\{\pi_1,\ldots,\pi_{\mathbf{k}_B}\}$ of walks, being (simple) paths.

Assume now that in this set $\Pi$ of strongly arc disjoint paths, there is a non-terminal vertex $v$ of degree at most two in the underlying graph $H$ of $\bigcup_{i=1}^{\mathbf{k}_B}\pi_i$. Without loss of generality assume that $v$ belongs to the path $\pi_1$. Similarly as above, if $\deg_H(v)=1$, then we shorten $\pi_1$ by deleting its last arc. Otherwise, if $\deg_H(v) =2$ and $v$ is not the endpoint of $\pi_1$, then modify $\pi_1$ be replacing $v$ together with the two arcs of $\pi_1$ incident to it by the arc connecting the predecessor and successor of $v$ in $\pi_1$, respectively. Observe that since $v$ was a non-terminal vertex in the underlying graph, the underlying graph of (the new)  $\bigcup_{i=1}^{\mathbf{k}_B}\pi_i$ is another Steiner tree of $\cF$ (and does not include $v$). Moreover, the above modification keeps paths strongly arc-disjoint and does not change the starting vertex of $\pi_1$. Therefore, by subsequently replacing all such non-terminal vertices of degree at most two, we obtain a tree-like set of $\mathbf{k}_B$ paths in the transitive closure $\textup{TC}(D)$ such that the underlying graph of their union is a Steiner tree of $\cF$ with no degree two vertices except those either belonging to $\cF$ or being end-vertices of exactly two paths (in ${\rm TC}(D)$). Therefore, we conclude that the number of non-terminal vertices in this underlying graph is at most $|\cF|-1$, which completes our proof of the lemma. 

Indeed, the bound is obvious if $|\cF| \le 2$. So assume now that $|\cF| \ge 3$, the statement is valid for any set $\cF' \subset \cV$ with $0 \le |\cF'| < |\cF|$, and let $\Pi$ be a tree-like set of paths in the transitive closure ${\rm TC}(D)$ 
such that the underlying graph $H$ of their union is a Steiner tree $T$ of $\cF$ with no degree two vertices except those either belonging to $\cF$ or being end-vertices of exactly two paths. 
Let $v$ be a non-terminal vertex in $H$ (if no such $v$ exists, then there is nothing to prove), and let $\Pi' \subseteq \Pi$ be the set of paths ending at vertex $v$. By deleting all path arcs with endpoint $v$ for paths in $\Pi'$ and by replacing two consecutive path arcs incident to $v$ by the relevant arc connecting the predecessor and successor of $v$ in $\pi$, respectively, for any path $\pi \in \Pi \setminus \Pi'$, we obtain the set $\Pi'$ of strongly arc-distinct paths and a non-trivial partition $\cF_1 \cup \cdots \cup \cF_r=\cF$ such that the underlying graph of their union consists of $r\ge 2$ Steiner trees $T_1, ,\dots, T_r$ of $\cF_1,\ldots, \cF_r$, respectively, all of them with no degree two vertices except those either belonging to $\cF$ or being end-vertices of exactly two paths. By induction assumption, each tree $T_i$ has at most $|\cF_i|-1$ non-terminal vertices, $i=1, \ldots, r$, and so $T$ has at most $1+\sum_{i=1}^{r}(|\cF_i|-1) \le |\cF|-1$ non-terminal vertices since $r \ge 2$.       
\end{proof}

Taking into account the above lemma, a given (restricted) instance $D=(\cV,\cA,F,B)$ of the \problemAllCMR\ problem can be transformed (in polynomial time) into the answer-equivalent (restricted) instance $\textup{TC}(D)=(\cV,\cA', F, B)$ of the \emph{tree-like-restricted} variant of the \problemAllCMR\ problem in which only tree-like plough paths that together visit at most $2|\cF|-1$ vertices are allowed. Observe that $\textup{TC}(D)=(\cV,\cA', F, B)$ admits a positive answer to the tree-like-restricted \problemAllCMR\ problem if and only if $\textup{TC}(D)$ has a subtree $T=(\cV_T,\cA_T)$ of order at most $2|\cF|-1$ and such that $\cF \subseteq \cV_T$ and all edges of $T$ can be traversed by at most $\mathbf{k}_B$ snow ploughs following arc-distinct paths starting at vertices in $\cB$ (obeying the plough-quantity function~$B$). This motivates us to consider the following problem. 

Let $D=(\cV,\cA,F,B)$ be a directed graph of order $n$ and size $m$, with two vertex-weight functions $F \colon \cV \rightarrow \{0,1\}$ and $B \colon \cV \rightarrow \mathbb{N}$
such that $B^{-1}(\natplus) \subseteq F^{-1}(1)$, and let  $T=(V,A,L)$ be a directed vertex-weighted tree of order $\eta$, with a vertex-weight function $L \colon V \rightarrow \mathbb{N}$.

\begin{description}
\item[The Tree Pattern Embedding problem] (\problemTPE)\\[2mm]  
Does $D$ have a~subgraph $H=(\cV_H,\cA_H)$ isomorphic to $T$ such that $F^{-1}(1) \subseteq \cV_H$ and $L(v) \le B(h(v))$ for any vertex $v$ of $T$, where $h$ is an isomorphism of $T$ and $H$?
\end{description}

In Subsection~\ref{subsec:TPE}, we prove Theorem~\ref{thm:TPE-FPT} given below which states that there is a randomized algorithm that solves the \problemTPE\ problem in $O^\ast(2^{\eta})$ time, where  the notation $O^*$ suppresses polynomial terms in the order $n$ of the input graph $D$. We point out that if the order $\eta$ of $T$ is less than $|F^{-1}(1)|$ or at least $n+1$, then the problem becomes trivial, and so, in the following, we assume $|F^{-1}(1)| \le \eta \le n$. 

\begin{theorem}\label{thm:TPE-FPT}
There is a randomized algorithm that solves the~\problemTPE\ problem in $O^\ast(2^{\eta})$~time.
\end{theorem}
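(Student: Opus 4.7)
The plan is to reduce \problemTPE\ to the detection of a monomial with specific properties in a polynomial computed by a small arithmetic circuit, and then to invoke the algebraic framework of~\cite{KoutisW16} with adaptations for the directed setting and for the coverage/weight constraints. Concretely, I would root $T$ at an arbitrary vertex $r$ and build a polynomial $P$ in the variables $\{x_u\}_{u \in \cV}$ by a bottom-up dynamic program on $T$: for each $v \in V$ and each candidate image $u \in \cV$ with $L(v) \le B(u)$, set
\[
 P_{v,u}(x) \;=\; x_u \cdot \prod_{w \text{ child of } v}\; \sum_{u'} P_{w,u'}(x),
\]
where for each child $w$ the inner sum runs over $u' \in \cV$ with $(u,u') \in \cA$ or $(u',u) \in \cA$, matching the orientation of the arc between $v$ and $w$ in $T$; for a leaf $v$ the product is empty and $P_{v,u}=x_u$. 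Candidate images with $L(v) > B(u)$ are ruled out by setting $P_{v,u}=0$, so the weight constraint is built directly into the recursion. Taking $P = \sum_{u \in \cV} P_{r,u}$, every mapping $\phi : V \to \cV$ respecting arc orientations and vertex weights contributes the monomial $\prod_{v \in V} x_{\phi(v)}$ to $P$, and these are all its contributions.

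Such a $\phi$ is injective exactly when its contributing monomial is multilinear of degree $\eta$, so \problemTPE\ reduces to testing whether $P$ has a multilinear degree-$\eta$ monomial that additionally involves every variable $x_t$ with $t \in F^{-1}(1)$. I would enforce both properties simultaneously by evaluating $P$ over the group algebra $\mathrm{GF}(2^\ell)[(\mathbb{Z}/2)^\eta]$ after a substitution $x_u \mapsto x_u + g_u$ in the style of~\cite{KoutisW16}: at a non-terminal $u$ the element $g_u$ is uniformly random, so that non-multilinear contributions cancel while every multilinear one survives with constant probability, whereas at a terminal $u=t$ the element $g_t$ is chosen to expose a distinguished coordinate of $(\mathbb{Z}/2)^\eta$ reserved for $t$, so that any surviving monomial is forced to involve every terminal variable. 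Since one operation in this group algebra costs $2^\eta$ up to polynomial factors and the DP uses a polynomial number of such operations, the overall running time is $O^\ast(2^\eta)$.

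I expect the main obstacle to be precisely this last step: devising the substitution so that multilinearity and terminal coverage are captured together inside a single group-algebra evaluation of dimension $2^\eta$ (rather than requiring an extra $2^{|F^{-1}(1)|}$ factor via inclusion--exclusion over which terminals are forced to appear), while ensuring that a witness monomial survives with probability bounded away from zero. Verifying the relevant group-algebra identities for the directed variant of the tree-pattern problem, as the construction of~\cite{KoutisW16} was originally phrased for undirected objects, together with the careful bookkeeping of arc orientations in the DP, is where the bulk of the technical work lies.
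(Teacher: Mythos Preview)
Your circuit construction is essentially the paper's: root $T$ at $r$ and set $P_{v,u}=x_u\cdot\prod_{w\text{ child of }v}\sum_{u'} P_{w,u'}$, where the inner sum respects the orientation of the $T$-arc between $v$ and $w$, and pairs with $L(v)>B(u)$ are zeroed out. The paper's polynomial $\AC{u}{w}{X}$ is exactly this, with the product split into the factors $\ACin{u}{w}{X}$ and $\ACout{u}{w}{X}$ according to orientation. So up to the handling of terminal coverage, the two proofs coincide.

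Where they diverge is precisely the point you flag as the main obstacle, and here the paper takes a much simpler route. It does not try to encode coverage inside the group-algebra substitution at all. Instead it introduces one extra commuting variable $z$ and multiplies the local contribution by the indicator $\indf{u,w}{\cS}$, which equals $z$ when $w\in\cS=F^{-1}(1)$ (and $L(u)\le B(w)$), equals $1$ when $w\notin\cS$ (and $L(u)\le B(w)$), and equals $0$ otherwise. An injective embedding covering all of $\cS$ then contributes a monomial $z^{|\cS|}\cdot x_{w_1}\cdots x_{w_\eta}$ with multilinear $x$-part of degree $\eta$, and conversely any such monomial witnesses a valid embedding. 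The paper then invokes, as a black box, the version of the Koutis--Williams lemma (stated as Lemma~\ref{lem:KW}) that tests for a monomial of the form $z^{t}\cdot Q(X)$ with $Q$ multilinear of degree at most $k$; its running time is $O^*(2^k t^2 s(n))$, which with $k=\eta$ and $t=|\cS|\le\eta$ gives $O^*(2^\eta)$. So the obstacle you anticipate simply does not arise: no new substitution is needed, only a polynomial-overhead extraction of the $z^{|\cS|}$-coefficient, already packaged in~\cite{KoutisW16}.

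As for your proposed substitution: as literally described --- uniformly random $g_u$ for non-terminals and $g_t$ a fixed basis vector for each terminal $t$ --- it does \emph{not} force coverage. A degree-$\eta$ multilinear monomial that misses terminal $t$ and uses one more non-terminal instead still has, with constant probability, $\eta$ linearly independent vectors in $(\mathbb{Z}/2)^\eta$, hence survives. What would rescue the idea is to additionally confine the non-terminal vectors to the $(\eta-|\cS|)$-dimensional subspace complementary to the terminal coordinates; then any monomial missing a terminal uses at least $\eta-|\cS|+1$ non-terminals, whose vectors are linearly dependent by pigeonhole, and the monomial vanishes. This refinement can be made rigorous, but it is strictly more work than the paper's $z$-variable trick, which reduces the whole question to an existing lemma.
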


Suppose that for each vertex $v \in V$, the value $L(v)$ corresponds to the number of snow ploughs located at $v$ that are required to simultaneously traverse (clear) all arcs of $T$, in an arc-distinct manner, and $T$ admits a positive answer to the \problemTPE\ problem in the transitive closure $\textup{TC}(D)=(\cV,\cA',F,B)$. 
Then $\textup{TC}(D)$ admits a positive answer to the tree-like-restricted \problemAllCMR\ problem, which immediately implies that $D$ admits a positive answer to the \problemAllCMR\ problem (by Lemma~\ref{lem:reduction-transitive}). Therefore, taking into account Theorem~\ref{thm:TPE-FPT}, we are now ready to present the main theorem of this section. For simplicity of presentation, we now assume that a (restricted) directed graph $D=(\cV,\cA,F,B)$ itself (not its transitive closure) is an instance of the tree-like-restricted \problemAllCMR\ problem.

\begin{theorem} \label{thm:CMR-FPT}
There is a randomized algorithm that solves the tree-like-restricted \problemAllCMR\ problem for $D=(\cV,\cA,F,B)$ in $O^\ast(144^{|\cF|})$ time, where $\cF=F^{-1}(1)$.
\end{theorem}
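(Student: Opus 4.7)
The strategy is to reduce the tree-like-restricted \problemAllCMR\ problem on $D=(\cV,\cA,F,B)$ to an enumeration over abstract directed tree patterns, each of which we hand to the \problemTPE\ algorithm from Theorem~\ref{thm:TPE-FPT}. As the discussion preceding the theorem shows, $D$ admits a positive answer iff there is a directed subtree $T^{\ast}$ of $D$ with at most $2|\cF|-1$ vertices, containing $\cF$, whose arcs can be partitioned into strongly arc-disjoint directed paths in such a way that no vertex $v$ is the start of more than $B(v)$ of these paths. We will enumerate $T^{\ast}$ only up to directed isomorphism and let \problemTPE\ locate an embedding of each candidate pattern inside $D$.

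For every $\eta$ with $|\cF|\le \eta \le 2|\cF|-1$, we enumerate every unlabeled undirected tree on $\eta$ nodes together with each of the $2^{\eta-1}$ orientations of its edges, obtaining an abstract directed tree pattern $T_{\textup{pat}}$; we then equip it with the vertex weight function
\[
L(v)\;=\;\max\bigl\{\,0,\;\deg^{+}_{T_{\textup{pat}}}(v)-\deg^{-}_{T_{\textup{pat}}}(v)\,\bigr\},
\]
which is the minimum number of directed paths that any arc-disjoint path partition of $T_{\textup{pat}}$ must start at $v$. We run the \problemTPE\ procedure on $(D,T_{\textup{pat}},L)$ for each such pattern and answer \emph{yes} as soon as any call succeeds. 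Correctness in the forward direction is immediate: if a witness $T^{\ast}$ exists and uses $s(v)$ starts at each vertex $v\in V(T^{\ast})$, then the directed isomorphism class of $T^{\ast}$ is one of the enumerated patterns, and the embedding $h$ induced by any isomorphism between $T_{\textup{pat}}$ and $T^{\ast}$ satisfies $L(v)\le s(h(v))\le B(h(v))$. Conversely, an \problemTPE-certified embedding $h$ of $(T_{\textup{pat}},L)$ yields the directed subtree $H=h(T_{\textup{pat}})\subseteq D$, and we obtain a genuine ploughing by taking the canonical minimum arc-disjoint path decomposition of $H$ (which starts exactly $L(v)\le B(h(v))$ paths at each $h(v)$) and assigning every remaining plough --- the $B(h(v))-L(v)$ surplus at each embedded vertex, together with all $B(u)$ ploughs at non-embedded vertices $u$ --- a length-zero walk; the total count equals $\sum_{u}B(u)=\mathbf{k}_{B}$, and the underlying graph of the union of walks is the Steiner tree $H$ of $\cF$.

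For the running time, the P\'olya--Otter asymptotic bounds the number of unlabeled unrooted trees on $\eta$ nodes by $O(\alpha^{\eta})$ for some constant $\alpha<3$, so the total number of pattern pairs examined is
\[
\sum_{\eta=|\cF|}^{2|\cF|-1} O\bigl(3^{\eta}\bigr)\cdot 2^{\eta-1}\;=\;O^{\ast}\bigl(6^{2|\cF|-1}\bigr)\;=\;O^{\ast}\bigl(36^{|\cF|}\bigr),
\]
and Theorem~\ref{thm:TPE-FPT} processes each in $O^{\ast}(2^{\eta})=O^{\ast}(4^{|\cF|})$ randomized time, giving the claimed $O^{\ast}(144^{|\cF|})$ bound. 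The main obstacle is verifying that the minimum weight choice $L(v)=\max\{0,\deg^{+}-\deg^{-}\}$ is precisely the one that makes the reduction sound in both directions: it is automatically dominated by $B(h(v))$ whenever a genuine witness exists, while any \problemTPE-level surplus above $L(v)$ is absorbable by length-zero walks. Once this bookkeeping between $L$, $B$ and $\mathbf{k}_{B}$ is settled, the enumeration and the complexity count are routine.
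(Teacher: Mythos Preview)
Your proof is correct and follows essentially the same approach as the paper's own proof: enumerate all $O(36^{|\cF|})$ directed tree patterns of order between $|\cF|$ and $2|\cF|-1$, equip each with the weight $L(v)=\max\{0,\deg_{\textup{out}}(v)-\deg_{\textup{in}}(v)\}$, and invoke the \problemTPE\ algorithm of Theorem~\ref{thm:TPE-FPT} on each. Your treatment is in fact more explicit than the paper's on the bookkeeping between $L$, $B$ and $\mathbf{k}_B$ (the paper asserts the forward direction in one line), and your observation that surplus ploughs can be absorbed by length-zero walks is exactly the point the paper leaves implicit; since $\cB\subseteq\cF\subseteq f(V)$ in the \problemAllCMR\ setting, the non-embedded case you mention is vacuous but harmless.
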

\begin{proof}
Keeping in mind Lemma~\ref{lem:reduction-transitive}, we enumerate all undirected trees of order $\eta$, where $|\cF| \le  \eta \le 2|\cF|-1$ (and $\eta\le n$); there are $O(9^{|\cF|})$ such candidates~\cite{Otter48}. For each such a $\eta$-vertex candidate tree, we enumerate all orientations of its edges, in order to obtain a directed tree; there are $2^{\eta-1}$ such orientations. Therefore, we have $O(36^{|\cF|})$ candidates for a directed oriented tree $T$ of order $\eta$, where $|\cF| \le  \eta \le 2|\cF|-1$. 

For each candidate $T=(V,A)$, we determine in $O(\eta)$ time how many (at least) snow ploughs, together with their explicit locations at vertices in $V$, are needed to traverse all arcs of $T$, in an arc-disjoint manner. This problem can be solved in linear time just by noting that the number of snow ploughs needed at a vertex $v$ is equal to $\max\{0, \deg_{\textup{out}}(v)-\deg_{\textup{in}}(v)\}$ (since arcs must be traversed in an arc-disjoint manner). The locations of snow ploughs define a vertex-weight function $L \colon V \rightarrow \mathbb{N}$. We then solve the \problemTPE\ problem with the instance $D$ and $T=(V,A,L)$ in $O^\ast(2^{\eta})$ time by Theorem~\ref{thm:TPE-FPT}.   

As already observed, if $T$ admits a positive answer to the \problemTPE\ problem for $D$, then $D$ admits a positive answer to the tree-like-restricted \problemAllCMR\ problem. 
Therefore, by deciding the \problemTPE\ problem for each of $O(36^{|\cF|})$ candidates, taking into account the independence of any two tests, we obtain a randomized algorithm for the restricted \problemCMR\ problem with a running time $O^\ast(144^{|\cF|})$.
\end{proof}

Taking into account Lemma~\ref{lem:ST-AllST}, we immediately obtain the following corollary.

\begin{corollary} \label{cor:ST-FPT} 
The \problemCMR\ problem admits a fixed-parameter algorithm with respect to the total number $l$ of facilities and snow team bases, running in $2^{O(l)} \cdot \textup{poly}(n)$ time, where $n$ is the order of the input graph.\qed
\end{corollary}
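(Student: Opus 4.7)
The plan is to chain together the three main results already established in this section. First, given an instance $D=(\cV,\cA,F,B)$ of the \problemAllCMR{} problem, I would construct its transitive closure $\textup{TC}(D)$ in polynomial time, inheriting the same vertex-weight functions $F$ and $B$. By Lemma~\ref{lem:reduction-transitive}, $D$ admits a positive answer to \problemAllCMR{} if and only if $\textup{TC}(D)$ admits a positive answer to the tree-like-restricted variant in which the union of walks spans at most $2|\cF|-1$ vertices, where $\cF=F^{-1}(1)$. This reduction is purely polynomial-time overhead and contributes nothing to the exponential factor.

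Next, I would apply Theorem~\ref{thm:CMR-FPT} to the resulting tree-like-restricted instance on $\textup{TC}(D)$. This yields a randomized algorithm with running time $O^\ast(144^{|\cF|}) = 2^{O(|\cF|)} \cdot \textup{poly}(n)$. Composing these two steps therefore produces a randomized fixed-parameter algorithm for the (unrestricted form of) \problemAllCMR{} parameterized by the number of facilities, with overall running time $2^{O(|\cF|)} \cdot \textup{poly}(n)$.

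Finally, to pass from \problemAllCMR{} to the general \problemCMR{} problem, I would invoke Lemma~\ref{lem:ST-AllST}: any algorithm solving \problemAllCMR{} in $2^{O(k)} \cdot \textup{poly}(n)$ time on a restricted instance with $k$ facilities lifts automatically to an algorithm for \problemCMR{} in $2^{O(l)} \cdot \textup{poly}(n)$ time, where $l$ is the total number of facilities and snow team bases in the original input. Plugging the above \problemAllCMR{} algorithm into this reduction yields exactly the claimed bound.

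Since all three ingredients are in place, there is no genuine obstacle beyond careful bookkeeping; the only point worth double-checking is that the reduction hidden inside Lemma~\ref{lem:ST-AllST}, which enumerates subsets $\cB' \subseteq \cB \setminus \cF$ and promotes the chosen bases into additional facilities, produces only restricted instances whose facility count is bounded by $l=|\cF|+|\cB\setminus\cF|$. This ensures that the exponent stays linear in the combined parameter $l$ rather than blowing up, and that the total number of enumerated restricted instances is at most $2^{|\cB\setminus\cF|} \le 2^l$, which is absorbed into the $2^{O(l)}$ factor.
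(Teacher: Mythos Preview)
Your proposal is correct and follows essentially the same approach as the paper: the paper's own justification is simply ``Taking into account Lemma~\ref{lem:ST-AllST}, we immediately obtain the following corollary,'' and your write-up is a faithful (and more explicit) unpacking of that one-line argument, chaining Lemma~\ref{lem:reduction-transitive}, Theorem~\ref{thm:CMR-FPT}, and Lemma~\ref{lem:ST-AllST} in the intended order.
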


\paragraph{Minimizing the number of used snow ploughs.} 
The first natural variation on the Snow Team problem is its minimization variant, which we shall refer to as the \problemCMRmin\ problem, where for a given input $n$-vertex digraph $D=(\cV,\cA,F,B)$, we wish to determine the minimum number of snow ploughs among those available at snow team bases in $\cB=B^{-1}(\natplus)$ that are enough to guarantee a positive answer the the (original) Snow Team problem in $D$. We claim that this problem also admits a fixed-parameter algorithm with respect to the total number $l$ of facilities and snow bases, running in time $2^{O(l)}\textup{poly}(n)$, and the solution is concealed in our  algorithm for the \problemCMR\ problem. Namely, observe that by enumerating all directed trees of order at most $|\cF|$, see the proof of Theorem~\ref{thm:CMR-FPT}, together with the relevant function $L$, and checking their embeddability in $D$, we accidentally solve this minimization problem: the embeddable tree with the minimum sum $\sum_{v \in V}L(v)$ constitutes the answer to \problemCMRmin\ problem. 

\begin{corollary}~\label{cor:STmin-FPT} 
The \problemCMRmin\ problem admits a randomized fixed-parameter algorithm with respect to the total number $l$ of facilities and snow team bases, running in $2^{O(l)} \cdot \textup{poly}(n)$ time, where $n$ is the order of the input graph.
\qed
\end{corollary}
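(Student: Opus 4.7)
The plan is to observe that the enumeration of candidate trees from the proof of Theorem~\ref{thm:CMR-FPT} already determines the exact number of ploughs each candidate requires, so the minimum over embeddable candidates is the sought value. First, I apply the reduction of Lemma~\ref{lem:ST-AllST} in a minimization-preserving way: iterate over all $\le 2^{l}$ subsets $\cB' \subseteq \cB \setminus \cF$, and for each choice promote $\cB'$ to facilities (obtaining a restricted \problemAllCMR\ instance), solve its minimization variant, and return the minimum over all such outer choices. Correctness of this reduction mirrors that of Lemma~\ref{lem:ST-AllST}: every feasible \problemCMRmin\ solution activates some subset $\cB'$ of bases outside $\cF$, and that very $\cB'$ is tried by the outer loop.

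For the inner minimization problem, I reuse the enumeration from Theorem~\ref{thm:CMR-FPT}: enumerate all $O(36^{|\cF|})$ candidate directed trees $T=(V,A)$ with $|\cF|\le|V|\le 2|\cF|-1$, and for each compute
$$L(v):=\max\{0,\deg_{\textup{out}}(v)-\deg_{\textup{in}}(v)\}, \qquad \sigma(T):=\sum_{v\in V}L(v).$$
As recalled in the proof of Theorem~\ref{thm:CMR-FPT}, $\sigma(T)$ is precisely the minimum number of snow ploughs needed to traverse all arcs of $T$ in an arc-disjoint manner, with $L$ an optimal placement. By Theorem~\ref{thm:TPE-FPT}, decide \problemTPE\ for $D$ and $T=(V,A,L)$ in $O^\ast(2^{|V|})\le O^\ast(4^{|\cF|})$ time. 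Output the minimum $\sigma(T)$ over all embeddable candidates, across all outer choices of $\cB'$. The total time is $2^l\cdot O(36^{|\cF|})\cdot O^\ast(4^{|\cF|})= 2^{O(l)}\,\textup{poly}(n)$.

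Correctness follows in two directions. If an embeddable $T$ has cost $\sigma(T)$, placing $L(v)$ ploughs at the image $h(v)$ (which is permissible since \problemTPE\ guarantees $L(v)\le B(h(v))$) yields a feasible \problemCMR\ solution using $\sigma(T)$ ploughs, so the true optimum is at most the algorithm's output. Conversely, given a feasible \problemCMRmin\ solution using $k^\ast$ ploughs, Lemma~\ref{lem:reduction-transitive} converts it into a tree-like solution with the same $k^\ast$ starting ploughs; its underlying directed tree $T^\ast$ is covered by these $k^\ast$ arc-disjoint walks, and the Eulerian lower bound yields $\sigma(T^\ast)\le k^\ast$. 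Since (up to isomorphism) $T^\ast$ is one of the enumerated candidates and the original embedding witnesses $L(v)\le B(h(v))$ for the induced $L$, the algorithm reports a value $\le \sigma(T^\ast)\le k^\ast$.

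The main obstacle is purely verification-level: one must check that the two tree-reduction steps in Lemma~\ref{lem:reduction-transitive} never increase the number of plough-starts (which is clear, since neither shortening, concatenating, nor contracting an existing walk introduces a new one), and that $\sigma(T)$ is simultaneously a tight lower bound on and an achievable upper bound for the number of ploughs required to cover $T$ by arc-disjoint walks. Both facts are routine consequences of the in/out-degree balance of directed walks, so no additional algorithmic machinery beyond Theorems~\ref{thm:TPE-FPT} and~\ref{thm:CMR-FPT} is needed.
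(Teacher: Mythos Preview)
Your proposal is correct and follows essentially the same approach as the paper: the paper's justification (in the paragraph preceding the corollary) is precisely that the tree enumeration from Theorem~\ref{thm:CMR-FPT} already computes $\sum_{v}L(v)$ for each candidate, so the embeddable tree with minimum such sum yields the optimum. Your write-up simply makes explicit two points the paper leaves implicit: the outer loop over $\cB'\subseteq\cB\setminus\cF$ (inherited from Lemma~\ref{lem:ST-AllST}) is minimization-preserving, and the walk-shortening transformations of Lemma~\ref{lem:reduction-transitive} never increase the number of non-trivial starting walks, so the optimal value is witnessed by some enumerated tree.
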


\paragraph{Maximizing the number of re-connected facilities.} 
In the case when for the input digraph $D=(\cV,\cA,F,B)$, not all facilities can be re-connected into one component, that is, $D$ admits a negative answer to the Snow Team problem, one can ask about the maximum number of facilities in $F^{-1}(1)$ that can be re-connected by snow ploughs located with respect to the plough-quantity function $B$~\cite{MZ16}; we shall refer to this problem as the \problemCMRmax\ problem. Since we can enumerate all subsets of $\cF=F^{-1}(1)$ in $O^\ast(2^{|\cF|})$ time, taking into account Theorem~\ref{thm:CMR-FPT}, we obtain the following corollary. 

\begin{corollary}~\label{cor:STmax-FPT} 
The \problemCMRmax\ problem admits a randomized fixed-parameter algorithm with respect to the total number $l$ of facilities and snow team bases, running in $2^{O(l)} \cdot \textup{poly}(n)$ time, where $n$ is the order of the input graph.\qed
\end{corollary}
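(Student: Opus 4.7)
The plan is to reduce \problemCMRmax\ to at most $2^{|\cF|}$ instances of the decision problem \problemCMR\ and then invoke Corollary~\ref{cor:ST-FPT} on each. Concretely, for every subset $\cF' \subseteq \cF = F^{-1}(1)$, I would form the instance $D_{\cF'}=(\cV,\cA,F_{\cF'},B)$ in which $F_{\cF'}(v)=1$ iff $v \in \cF'$ and the plough-quantity function $B$ is left untouched. Observe that $D_{\cF'}$ is a YES-instance of \problemCMR\ if and only if there exist $\mathbf{k}_B$ directed walks, with the prescribed starting vertices, whose union places all of $\cF'$ in a single component of the underlying graph; this is exactly the condition that the facilities of $\cF'$ can be re-connected by the available ploughs in $D$. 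Consequently, the optimum value of \problemCMRmax\ equals the maximum of $|\cF'|$ over those $\cF' \subseteq \cF$ for which $D_{\cF'}$ is a YES-instance of \problemCMR.

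For the running time, let $l$ denote the total number of facilities and snow team bases in $D$. In every $D_{\cF'}$, the number of bases is unchanged and the number of marked facilities is $|\cF'|\le |\cF|$, so the corresponding parameter is at most $l$. Applying Corollary~\ref{cor:ST-FPT} to each of the $2^{|\cF|}\le 2^{l}$ subinstances costs $2^{O(l)}\cdot \textup{poly}(n)$ time per call, which gives a total of $2^{|\cF|}\cdot 2^{O(l)}\cdot \textup{poly}(n)=2^{O(l)}\cdot \textup{poly}(n)$. The algorithm is randomized because the underlying \problemTPE\ routine from Theorem~\ref{thm:TPE-FPT} is randomized; its (one-sided) error can be driven below any desired constant by a standard polynomially-bounded number of repetitions per call, preserving the overall $2^{O(l)}\cdot\textup{poly}(n)$ bound.

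The only point that requires a brief sanity check is the direction ``$\cF'$ can be re-connected $\Rightarrow$ $D_{\cF'}$ is YES''. Here we use that marking additional vertices as non-facilities in $F_{\cF'}$ only relaxes the connectivity requirement: any walk system witnessing the re-connection of $\cF'$ in $D$ automatically satisfies the \problemCMR\ condition in $D_{\cF'}$, regardless of how the vertices in $\cF\setminus\cF'$ are distributed in the underlying components. I do not expect any further obstacle, as the main algorithmic content is already encapsulated in Corollary~\ref{cor:ST-FPT}; the present statement is obtained essentially for free by enumerating target subsets of facilities.
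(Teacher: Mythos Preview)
Your proposal is correct and follows essentially the same approach as the paper: enumerate all $2^{|\cF|}$ subsets of facilities and solve the decision problem on each, taking the largest YES-instance. The only cosmetic difference is that you cite Corollary~\ref{cor:ST-FPT} (for \problemCMR) whereas the paper points to Theorem~\ref{thm:CMR-FPT}; your choice is arguably cleaner since it directly handles the general \problemCMR\ instance $D_{\cF'}$ without having to unfold the reduction to \problemAllCMR\ again.
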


\paragraph{No pre-specified positions of snow ploughs.} 
Finally, another natural variant of the Snow Team problem is to allow any snow plough to start at any vertex. Formally, we define the following problem. 
\begin{description}
\item[The Snow Team problem with Unspecified snow team bases] (\problemCMRk)\\[2mm]  
Given a weight function $F \colon \cV \rightarrow \{0,1\}$ and an integer $k \ge 1$, do there exist $k$~directed walks in a digraph $D=(\cV,\cA)$ whose edges induce a subgraph $H$ of $D$ such that the set $F^{-1}(1)$ is a subset of the vertex set of $H$ and the underlying graph of $H$ is connected?
\end{description}

We claim that for the \problemCMRk\ problem, there is also  a randomized algorithm with the running time $2^{O(k+l)} \cdot \textup{poly}(n)$, where $l=|F^{-1}(1)|$ is the number of facilities, and  $n$ is the order of the input graph. The solution is analogous to that for the \problemCMR\ problem. Namely, one can prove a counterpart of Lemma~\ref{lem:reduction-transitive} which allows us to restrict ourselves to the restricted variant where only order $O(k+l)$ tree-like solutions are allowed. Then, the restricted variant is solved also using the algorithm for the \problemTPE\ problem as a subroutine: the function $B$ is the constant function $B(v)=n$, and among all directed tree candidates, we check only those with $\sum_{v \in V}L(v) \le k$. We omit details.

\begin{corollary}~\label{cor:CMRk-FPT} 
The \problemCMRk\ problem admits a randomized fixed-parameter algorithm with respect to the number $l$ of facilities and the number $k$ of snow ploughs, running in $2^{O(k+l)} \cdot \textup{poly}(n)$ time, where $n$ is the order of the input graph.\qed
\end{corollary}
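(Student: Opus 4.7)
The plan is to follow the template of Theorem~\ref{thm:CMR-FPT} almost verbatim, adapting the two reductions to the setting where snow ploughs have unspecified bases but their total number is capped at $k$. First, I would establish a counterpart of Lemma~\ref{lem:reduction-transitive}: if $(D,F,k)$ admits a positive answer to \problemCMRk, then $\textup{TC}(D)$ admits a positive answer witnessed by a tree-like family of at most $k$ (simple) paths whose underlying graph $H$ has order $O(k+l)$. The transformation argument carries over: apply the two-step modification (first shortening/splicing to make the walks strongly arc-disjoint and cycle-free, then suppressing non-facility vertices of degree at most two that are not path endpoints) exactly as in the proof of Lemma~\ref{lem:reduction-transitive}.

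The crucial bookkeeping step is the new leaf count. In the resulting tree $H$, each leaf is either a facility (at most $l$ of them) or the endpoint of one of the $k$ paths (at most $2k$ of them), so $H$ has at most $l+2k$ leaves. By the same inductive branching argument used at the end of the proof of Lemma~\ref{lem:reduction-transitive}, the number of non-facility internal vertices is at most (number of leaves) $-\,1 \le l+2k-1$, so $|\cV_H| \le 2l+2k-1 = O(k+l)$.

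Next I would reduce the tree-like-restricted \problemCMRk\ problem on $\textup{TC}(D)$ to \problemTPE\ precisely as in Theorem~\ref{thm:CMR-FPT}, with two small changes: take $B\equiv n$ (since no bases are pre-specified, any vertex is a feasible starting point and the quantity cap becomes inactive), and restrict the enumeration to oriented trees $T$ of order $\eta \le 2(l+k)-1$ whose plough requirement $\sum_{v\in V}L(v)$, with $L(v)=\max\{0,\deg_{\textup{out}}(v)-\deg_{\textup{in}}(v)\}$, does not exceed $k$. Otter's formula gives $O(c^{l+k})$ unlabeled trees of order $\eta$, each with $2^{\eta-1}=2^{O(k+l)}$ orientations, and Theorem~\ref{thm:TPE-FPT} solves each resulting \problemTPE\ instance in $O^*(2^{\eta})=O^*(2^{O(k+l)})$ time. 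Combining the factors yields the claimed $2^{O(k+l)}\cdot\textup{poly}(n)$ bound.

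The main obstacle I expect is the leaf-counting argument in the first step: unlike in Lemma~\ref{lem:reduction-transitive}, where the only leaves of interest are facilities, here the $2k$ path endpoints must also be tracked, and one must verify that the suppression step does not inadvertently disconnect the tree or cause path endpoints to coincide in a way that invalidates the induction. A clean way to handle this is to insist in the induction hypothesis that the underlying graph is a tree with every non-facility vertex either a path endpoint or of degree at least three, and then argue that the tree-splitting operation at a non-terminal vertex $v$ of degree $\ge 3$ preserves this invariant in each subtree, using the fact that $v$ itself becomes a new path endpoint in each piece. Once this invariant is secured, the rest of the argument is a routine adaptation of the proofs already given.
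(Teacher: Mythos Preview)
Your proposal is correct and follows precisely the sketch the paper gives in the paragraph preceding the corollary (a counterpart of Lemma~\ref{lem:reduction-transitive} yielding a tree-like solution of order $O(k+l)$ in $\textup{TC}(D)$, then \problemTPE\ with the constant function $B\equiv n$ and the filter $\sum_{v}L(v)\le k$); the paper itself writes ``We omit details'' at exactly this point. The only wobble is the bookkeeping you already flag as the main obstacle: the inequality ``non-facility internal vertices $\le$ (number of leaves)$\,-\,1$'' does \emph{not} follow from the induction in Lemma~\ref{lem:reduction-transitive} once path endpoints need not lie in $\cF$ (the four-vertex path $f_1\text{--}p_1\text{--}p_2\text{--}f_2$ realised by three single-arc plough paths has two leaves but two non-facility internal vertices), yet your proposed remedy is the right one---take $S=\cF\cup\{\text{path endpoints}\}$ as the terminal set, note $|S|\le l+2k$ and that every vertex outside $S$ has degree at least~$3$, and a one-line degree-sum count gives $|\cV_H\setminus S|\le |S|-2$, hence $|\cV_H|\le 2l+4k-2=O(k+l)$, with no induction needed.
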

 
Observe that if the number $k$ of available snow ploughs is not a part of the input, that is, we ask about the minimum number of walks whose underlying graph includes a Steiner tree for the set of facilities, then this problem seems not to be fixed-parameter tractable with respect to only the number of facilities. This follows from the fact that the minimum number of snow ploughs is unrelated to the number of facilities in the sense that even for two facilities to be connected, a lot of snow ploughs may be required, see Figure~\ref{fig:minST-notFPT} for an illustration.

\begin{figure}[!ht]
\begin{center}
\centering
\includegraphics[scale=0.9]{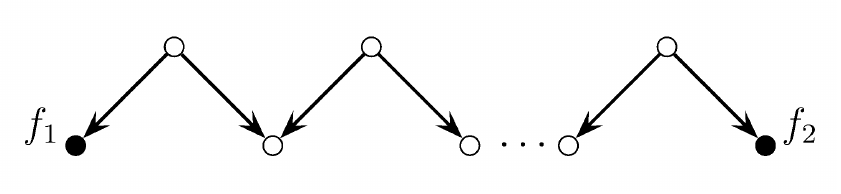}

\caption{Two facilities $f_1$ and $f_2$ require $n-1$ snow ploughs, where $n$ is the order of the digraph.}\label{fig:minST-notFPT}
\end{center}
\end{figure}

\subsection{The Tree Pattern Embedding problem} \label{subsec:TPE}

In this section, we solve the $\problemTPE$ problem by providing a randomized polynomial-time algorithm when the parameter $\eta$ is fixed. Our algorithm is based on the recent algebraic technique using the concepts of monotone arithmetic circuits and  monomials, 
introduced by Koutis in~\cite{Koutis08}, developed by Williams and Koutis in \cite{KoutisW16,W09}, and adapted to some other graph problems, e.g.,~\cite{BHKK17,BjorklundHT12,Kowalik16,FominLRSR12}.

A {\em $($monotone$)$ arithmetic circuit} is a directed acyclic graph where each leaf (i.e., vertex of in-degree $0$)
is labeled either with a variable or a real non-negative constant ({\em input gates}), each non-leaf vertex is  
labeled either with $+$ (an {\em addition gate} with an unbounded fan-in) or with $\times$ (a {\em multiplication gate} with fan-in two), and where a single vertex is distinguished (the {\em output gate}). Each vertex (gate) of the circuit represents (computes) a polynomial --- these are naturally defined by induction on the structure of the circuit starting from its input gates --- and we say that a polynomial is {\em represented} ({\em computed}) {\em by an arithmetic circuit} if it is represented (computed) by the output gate of the circuit. Finally, a polynomial that is just a product of variables is called a~{\em monomial}, and a~monomial in which each variable occurs at most once is termed a {\em multilinear monomial}~\cite{Koutis08,W09}.

We shall use a slight generalization of the main results of Koutis and Williams in~\cite{Koutis08,W09}, provided by them in Lemma~1 in~\cite{KoutisW16}, which, in terms of our notation, can be expressed as follows.

\begin{lemma}\label{lem:KW}{\em (\cite{KoutisW16})}
Let $P(x_1,\ldots,x_n,z)$ be a polynomial represented
by a monotone arithmetic circuit of size $s(n)$ and let $t$ be a non-negative integer. There is a randomized algorithm that for input $P$ runs in $O^*(2^kt^2s(n))$ time and outputs ``YES''
with high probability if there is a monomial of the form $z^{t} Q(x_1,\ldots,x_n)$,
where $Q(x_1,\ldots,x_n)$ is a multilinear
monomial of degree at most $k$ in the sum-product expansion of $P$, and
always outputs ``NO'' if there is no such  
monomial $z^{t} Q(x_1,\ldots,x_n)$ in the expansion.
\qed
\end{lemma}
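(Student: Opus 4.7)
The plan is to reduce \problemTPE\ to the monomial-detection problem of Lemma~\ref{lem:KW}. I will construct a monotone arithmetic circuit of size polynomial in $n$ representing a polynomial $P$ in the variables $x_u$ (one per $u \in \cV$) together with one auxiliary variable $z$, with the property that $P$ contains a monomial of the form $z^{|\cF|} Q(x_1,\ldots,x_n)$ with $Q$ a multilinear monomial of degree $\eta$ if and only if the given instance admits a valid embedding, where $\cF = F^{-1}(1)$. Invoking Lemma~\ref{lem:KW} with $k = \eta$ and $t = |\cF|$ then delivers the desired $O^\ast(2^\eta)$ bound.

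The circuit is built by a natural dynamic program on $T$, rooted at an arbitrary vertex $r$. For each tree vertex $v$ and each $u \in \cV$ with $B(u) \ge L(v)$, I place a gate computing a polynomial $P_{v,u}$ that represents all (not necessarily injective) homomorphisms of the subtree $T_v$ into $D$ sending $v$ to $u$ and obeying arc orientation and capacity constraints. Setting $\zeta(u) = z$ when $u \in \cF$ and $\zeta(u) = 1$ otherwise, take $P_{v,u} = x_u \cdot \zeta(u)$ for a leaf $v$, and
\[
P_{v,u} = x_u \cdot \zeta(u) \cdot \prod_{w \text{ child of } v \text{ in } T} \left( \sum_{u'} P_{w,u'} \right)
\]
for internal $v$, where the inner sum ranges over those $u'$ with $B(u') \ge L(w)$ and with $(u,u') \in \cA$ if the arc of $T$ between $v$ and $w$ is directed from $v$ to $w$, or $(u',u) \in \cA$ otherwise. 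The circuit output is $P = \sum_{u : B(u) \ge L(r)} P_{r,u}$. The circuit is monotone (its only scalar input is the constant $1$) and has $O(\eta(n+m))$ gates.

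Correctness follows from the standard monomial-to-choice correspondence for monotone circuits. Each monomial in the sum-product expansion of $P$ arises by selecting, at every sum gate encountered during expansion, a single addend; such a joint selection is exactly a map $h \colon V \to \cV$ that respects every arc orientation of $T$ (encoded into the sums) and every capacity constraint (encoded into the summation domains). The monomial attached to $h$ is $z^{a(h)} \prod_{v \in V} x_{h(v)}$, where $a(h) = |\{v \in V : h(v) \in \cF\}|$. The $x$-part is multilinear of degree $\eta$ exactly when $h$ is injective, and in that case $a(h) = |h(V) \cap \cF|$, so pinning the $z$-exponent to $|\cF|$ is equivalent to $\cF \subseteq h(V)$. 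Monotonicity excludes cancellations between monomials, so Lemma~\ref{lem:KW} applied with $k = \eta$ and $t = |\cF| \le \eta$ detects a valid embedding in time $O^\ast(2^\eta \cdot \eta^2 \cdot \textup{poly}(n)) = O^\ast(2^\eta)$.

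The main obstacle I anticipate is the correctness bookkeeping in the preceding step: verifying, all at once, that multilinear degree-$\eta$ monomials with $z$-exponent exactly $|\cF|$ are in bijection with valid embeddings, while simultaneously tracking the directed nature of $T$ (via the two neighborhood variants in the inner sum), the capacity constraints, injectivity (via multilinearity), and full facility coverage (via the $z$-exponent). The circuit construction itself, the size and monotonicity checks, and the invocation of Lemma~\ref{lem:KW} are routine once this bijection is in place.
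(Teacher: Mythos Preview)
Your proposal does not prove Lemma~\ref{lem:KW}. Lemma~\ref{lem:KW} is the Koutis--Williams multilinear-monomial detection result: given a monotone arithmetic circuit for $P(x_1,\ldots,x_n,z)$, decide in $O^\ast(2^k t^2 s(n))$ time whether the sum-product expansion of $P$ contains a monomial $z^t Q$ with $Q$ multilinear of degree at most $k$. The paper does not prove this lemma either; it is quoted from~\cite{KoutisW16} and closed with a \texttt{\textbackslash qed}. A proof would require the group-algebra machinery over $\mathbb{Z}_2^k$ (or the Schwartz--Zippel refinement of Williams), none of which appears in your write-up.

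What you have actually written is a proof of Theorem~\ref{thm:TPE-FPT}: you build a specific polynomial encoding homomorphisms of $T$ into $D$ and then \emph{invoke} Lemma~\ref{lem:KW} as a black box. That argument is essentially the same as the paper's own proof of Theorem~\ref{thm:TPE-FPT} in Subsection~\ref{subsec:TPE} (your $P_{v,u}$ is the paper's $\AC{v}{u}{X}$, your $\zeta(u)$ is the paper's indicator $\indf{\cdot,\cdot}{\cS}$, and your correctness paragraph is Lemma~\ref{lem:embedding}). But it is a proof of the wrong statement: you are using Lemma~\ref{lem:KW}, not establishing it.
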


Taking into account the above lemma, for the input digraph $D=(\cV,\cA,F,B)$ and directed tree $T=(V,A,L)$, the idea is to construct an appropriate polynomial $\ACfinal{X}{z}$ such that $\ACfinal{X}{z}$ contains a monomial of the form $z^{\card{\cS}}b(X)$, where $b(X)$ is a multilinear polynomial with exactly $|V|$ variables in $X$
and $\cS=F^{-1}(1) \cup B^{-1}(\natplus)$, if and only if the \problemTPE\ problem has a solution for the input $D$ and $T$ (see Lemma~\ref{lem:embedding} below).

\paragraph{Polynomial construction.} Let $D=(\cV,\cA,F,B)$ be a directed graph, with two vertex-weight functions $F \colon \cV \rightarrow \{0,1\}$ and $B \colon \cV \rightarrow \mathbb{N}$, and let  $T=(V,A,L)$ be a directed vertex-weighted tree of order $\eta$, with a vertex-weight function $L \colon V \rightarrow \mathbb{N}$. We consider $T$ to be rooted at a vertex $r \in V$, and for a non-root vertex $v$ of $T$, we denote the parent of $v$ in $T$ by $p(v)$. 
Now, for $v \in V$, define two sets $\inN{T}{v}$ and $\outN{T}{v}$:
\[
\begin{array}{l}
\inN{T}{v} = \left\{u\in V\st (u,v)\in A\textup{ and }u\neq p(v)\right\},\\[2mm]
\outN{T}{v} = \left\{u\in V\st (v,u)\in A\textup{ and }u \neq p(v)\right\}.
\end{array}
\]

The idea is to treat $T$ as a `pattern' that we try to embed into the digraph $D$, with respect to functions $F,B$ and $L$. Denote $\cS=F^{-1}(1) \cup B^{-1}(\natplus)$ for brevity. We say that $T$ has an \emph{$\cS$-embedding into $D$} if the following holds (these are the formal conditions that need to be satisfied for the embedding to be correct):
\begin{enumerate}[label={\normalfont{(E\arabic*)}}]
\item\label{it:e1} There exists an injective function (homomorphism) $f\colon V\to\cV$ such that if $(u,v)\in A$, then $(f(u),f(v))\in\cA$.
\item\label{it:e2} $\cS \subseteq f(V)$, where $f(V)=\{f(v)\st v\in V\}$.
\item\label{it:e3} $L(v) \le B(f(v))$ for any $v \in V$.
\end{enumerate}

First, for $\cS\subseteq\cV$, $w\in\cV$ and $u \in V$, we introduce a particular indicator function, used for fulfilling Conditions~\ref{it:e2} and~\ref{it:e3}:
\[\indf{u,w}{\cS} =
\begin{cases}
z, & \textup{if }w\in \cS \textup{ and } L(u) \le B(w),\\
1, & \textup{if }w\notin \cS \textup{ and } L(u) \le B(w),\\
0, & \textup{otherwise, i.e., if }L(u) > B(w).
\end{cases}
\]
Next, following~\cite{KoutisW16}, we define a polynomial $\ACfinal{X}{T}$ that we then use to test existence of a desired $\cS$-embedding of $T$ in $D$.
Namely, we root $T$ at any vertex $r\in V$. Now, a polynomial $\AC{u}{w}{X}$, for a subtree $T_u$ of $T$ rooted at $u\in V$ and for a vertex $w\in\cV$, is defined inductively (in a bottom up fashion on $T$) as follows. For each $u\in V$ and for each $w\in \cV$: if $u$ is a leaf in $T$, then
\begin{equation} \label{eq:circuit1}
\AC{u}{w}{X} = \indf{u,w}{\cS}\cdot x_w,
\end{equation}
and if $u$ is not a leaf in $T$, then
\begin{equation} \label{eq:circuit2}
\AC{u}{w}{X} = 
\begin{cases}
\indf{u,w}{\cS}\cdot x_w\cdot\ACin{u}{w}{X} \cdot \ACout{u}{w}{X}, & \textup{if } \outN{T}{u}\neq\emptyset \land \inN{T}{u}\neq\emptyset, \\
\indf{u,w}{\cS}\cdot x_w\cdot\ACin{u}{w}{X},  & \textup{if } \outN{T}{u}=\emptyset, \\
\indf{u,w}{\cS}\cdot x_w\cdot\ACout{u}{w}{X},  & \textup{if } \inN{T}{u}=\emptyset,
\end{cases}
\end{equation}
where
\begin{equation} \label{eq:circuitIn}
\ACin{u}{w}{X} = \prod_{v\in\inN{T}{u}}\left(\sum_{(w',w)\in\cA}\AC{v}{w'}{X}\right),
\end{equation}
\begin{equation} \label{eq:circuitOut}
\ACout{u}{w}{X} = \prod_{v\in\outN{T}{u}}\left(\sum_{(w,w')\in\cA}\AC{v}{w'}{X}\right).
\end{equation}
Finally, the polynomial $\ACfinal{X}{z}$ is as follows:
\begin{equation}\label{eq:circuitFinal}
\ACfinal{X}{z} = \sum_{w\in\cV} \AC{r}{w}{X}.
\end{equation}

We have the following lemma.
\begin{lemma} \label{lem:embedding}
The polynomial $\ACfinal{X}{z}$ contains a monomial of the form $z^{\card{\cS}}b(X)$, where $b(X)$ is a multilinear polynomial with exactly $\eta$ variables in $X$, if and only if the $\eta$-vertex tree $T$ has an $\cS$-embedding into $D$.
\end{lemma}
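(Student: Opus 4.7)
The plan is to set up a tight correspondence between (a)~monomials in the sum--product expansion of $\ACfinal{X}{z}$ of the prescribed shape $z^{\card{\cS}} b(X)$ with $b(X)$ a product of $\eta$ distinct $x$-variables, and (b)~functions $f\colon V\to\cV$ satisfying conditions~\ref{it:e1}--\ref{it:e3}. The key observation is that each monomial in the expansion is obtained by selecting one summand at every $+$-node of the underlying arithmetic circuit: the outer sum in~\eqref{eq:circuitFinal} picks $w=f(r)$; each inner sum $\sum_{(w',w)\in\cA}$ (resp.\ $\sum_{(w,w')\in\cA}$) appearing in $\ACin{u}{w}{X}$ (resp.\ $\ACout{u}{w}{X}$) picks the image $w'=f(v)$ of a child $v$ of $u$, and the very fact that the sum ranges over arcs of $D$ witnesses the existence of the required arc between $f(u)$ and $f(v)$ in the appropriate direction.

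For the ($\Leftarrow$) direction, starting from an $\cS$-embedding $f$, I would traverse $T$ top-down from $r$ and, at every $+$-node encountered in the recursive unfolding of $\AC{r}{f(r)}{X}$, pick the summand indexed by $f$. The product of the chosen factors evaluates to $\prod_{u\in V}\indf{u,f(u)}{\cS}\cdot x_{f(u)}$. By~\ref{it:e3} no indicator equals $0$; by~\ref{it:e2} combined with injectivity of $f$ from~\ref{it:e1}, exactly $\card{\cS}$ of the $\eta$ indicators evaluate to $z$ and the rest to $1$; injectivity also makes $\prod_{u\in V}x_{f(u)}$ a multilinear monomial of degree $\eta$. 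Since the circuit is monotone (all coefficients are non-negative), this monomial cannot be cancelled elsewhere in the expansion of $\ACfinal{X}{z}$.

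For the ($\Rightarrow$) direction, suppose $\ACfinal{X}{z}$ contains a monomial $z^{\card{\cS}} b(X)$ of the stipulated shape. Monotonicity again guarantees that such a monomial arises from at least one concrete selection of summands. Reading off the choice at every $+$-node defines a function $f\colon V\to\cV$; condition~\ref{it:e1} holds because each inner sum only ranges over index pairs that are arcs of $D$ in the correct direction. The $x$-part of the selected term is $\prod_{u\in V}x_{f(u)}$, so multilinearity combined with $x$-degree exactly $\eta$ forces all $f(u)$'s to be pairwise distinct, i.e.\ $f$ is injective. Each $u\in V$ also contributes the factor $\indf{u,f(u)}{\cS}$ exactly once, so a nonzero coefficient yields $L(u)\le B(f(u))$ for every $u$, giving~\ref{it:e3}. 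Finally, the $z$-exponent of the selected term equals the number of $u\in V$ with $f(u)\in\cS$; requiring this to equal $\card{\cS}$ together with injectivity forces $\cS\subseteq f(V)$, which is~\ref{it:e2}.

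The only delicate bookkeeping is to confirm that each vertex $u\in V$ contributes its indicator $\indf{u,f(u)}{\cS}$ and its variable $x_{f(u)}$ exactly once to any selected monomial; this follows because these two factors appear in $\AC{u}{w}{X}$ only at the top level of~\eqref{eq:circuit1}--\eqref{eq:circuit2} and are not reintroduced through the recursive factors $\ACin{u}{w}{X}$ or $\ACout{u}{w}{X}$, which concern strictly smaller subtrees with vertex sets disjoint from $\{u\}$. Beyond this routine indexing check, I do not anticipate any further algebraic obstacle.
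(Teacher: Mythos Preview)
Your proposal is correct and follows essentially the same approach as the paper: both arguments unfold the recursive definition of $\AC{u}{w}{X}$ to identify monomials with homomorphisms $f\colon V\to\cV$, then use multilinearity of $b(X)$ to force injectivity, the nonvanishing of the indicator factors to obtain~\ref{it:e3}, and the $z$-exponent count to obtain~\ref{it:e2}. The paper packages this as a bottom-up induction on subtree size (stating an invariant for $\AC{u}{w_1}{X}$), whereas you phrase it as a top-down selection of summands at each $+$-gate; your version is a bit more explicit about monotonicity preventing cancellation and about the per-vertex bookkeeping, but the underlying argument is the same.
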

\begin{proof}
Consider a vertex $u$ of $T$ and assume that the subtree $T_u$ is of order $j$.
Observe that, by a straightforward induction on the size of a subtree, a monomial $z^q x_{w_1}\cdots x_{w_{j}}$, where $w_i\in\cV$ for each $i\in\{1,\ldots,j\}$, is present in $\AC{u}{w_1}{X}$ if and only if the three following conditions hold.
\begin{enumerate}[label={\normalfont{(\roman*)}}]
\item\label{it:ind1} There exists a homomorphism $f_u$ from the vertices of $T_u$ to $w_1,\ldots,w_j$ such that $f_u(u)=w_1$.
\item\label{it:ind2} $\card{\cS\cap\{w_1,\ldots,w_j\}}\le q$ and the equality holds if $w_1,\ldots,w_j$ are pairwise different.
\item\label{it:ind3} $L(v) \le B(f_u(v))$ for any vertex $v$ of $T_u$.
\end{enumerate}
The fact that $f_u$ is a homomorphism follows from the observation that, during construction of $\AC{u}{w_1}{X}$ in~\eqref{eq:circuitIn} and~\eqref{eq:circuitOut}, a neighbor $v$ of $u$ is mapped to a node $w'$ of $D$ in such a way that if $(v,u)\in A$ then $(w',w)\in\cA$ (see~\eqref{eq:circuitIn}), and if $(u,v)\in A$ then $(w,w')\in\cA$ (see~\eqref{eq:circuitOut}).
Conditions~\ref{it:ind2} and~\ref{it:ind3} are ensured by appropriate usage of the indicator function in~\eqref{eq:circuit1}, namely, if $u$ is mapped to $w$ in a homomorphism corresponding to $\AC{u}{w}{X}$, then we add the multiplicative factor of $z$ to $\AC{u}{w}{X}$ provided that $L(v) \le B(w)$.

Thus, we obtain that $\ACfinal{X}{z}$ has a multilinear polynomial $z^{\card{\cS}}x_{w_1}\cdots x_{w_{\eta}}$ if and only if $T$ has an $\cS$-embedding into $D$.
\end{proof}

Observe that the polynomial $\ACfinal{X}{z}$ and the auxiliary polynomials $\ACin{u}{w}{X},$ $\ACout{u}{w}{X}$ can be represented by a monotone arithmetic circuit of size polynomial in the order $n$ of the input digraph $D$. To start with, we need $n+1$ input gates for the variables corresponding to vertices of $D$, and the auxiliary variable $z$. With each of the aforementioned polynomials, we associate a gate representing it, which gives in total $O(\eta n)$ such gates. In order to implement the recurrences defining the polynomials, assuming unbounded fan-in of addition gates, we need $O(n)$ auxiliary gates for each recurrence involving large products. Thus, the resulting circuit is of size $O(n^3)$. Hence, by Lemma~\ref{lem:KW} combined with Lemma \ref{lem:embedding}, we conclude that the existence of an $\cS$-embedding of the $\eta$-vertex tree $T$ into $D$ can be decided in $O^*(2^{|\cS|})$ time. Consequently, since $|\cS|\le \eta$, we obtain Theorem \ref{thm:TPE-FPT} by the definition of an $\cS$-embedding.

\paragraph{Remark.} The above approach can be adapted to the case when we want to embed a directed forest $T=(\cV,\cA,F,B)$ of order $\eta$ into a directed graph. All we need is to build a relevant polynomial for each rooted directed tree-component of $T$, and then to consider the product $S(X,T)$ of these polynomials, asking about the existence of a monomial of the form $z^{\card{\cS}}b(X)$, where $b(X)$ is a multilinear polynomial with exactly $\eta$ variables in~$X$. Also, by a similar approach, we may consider and can solve (simpler) variants of our embedding problem without the weight function $F$ or without the weight functions $B$ and~$L$; details are omitted.

\section{The Snow Team problem is hard} \label{sec:theCMRproblemNP}

In this section, we prove that the Snow Team problem is NP-complete by describing a~polynomial-time reduction from the Set Cover problem. 

Let  $\univ=\{u_1,\ldots,u_n\}$ be a set of $n$ items and let $\allsets=\{S_1,\dots,S_m\}$ be a family of $m$ sets containing the items in $\cU$, i.e., each $S_t\subseteq\univ$, such that each element in $\univ$ belongs to at least one set from $\allsets$. A~$k$-element subset of $\allsets$, whose union is equal to the whole universe $\univ$, is called \textit{a set cover of size $k$.}

\begin{description}
\item[The Set Cover problem] (\problemSC)\\[2mm]  
Given $\univ,\allsets$ and a positive integer $k$, does there exist a set cover of a size $k$?
\end{description}

The Set Cover problem is well known to be NP-complete~\cite{GareyJohnson79}. We are going to prove that for a given $\univ=\{u_1,\ldots,u_n\},\allsets=\{S_1,\dots,S_m\}$ and an integer $k$, there exists a set cover of size $k$ if and only if there is a solution for the $\problemCMR$ problem in the appropriately constructed acyclic digraph $D_{\textup{SC}}=(\cV,\cA,F,B)$. Basically, in this graph, for each element $u \in \univ$, there is a gadget $C_u$ being the union of the number of `vertical' paths equal to the number of sets that $u$ appears in. Also, there is one `spanning' gadget including $m$ `horizontal' paths $\pathh{t}$, $t=1,\ldots,m$, each of which visits the relevant gadget $C_u$ if the element $u$ belongs to the set $S_t$. In our construction, all vertices are facilities, and snow teams are located only at source vertices, one team at each source vertex (see details below).

In the following, we assume $\univ=\{1, \ldots, n\}$ and that elements in $S_t=\{x_1,\ldots,x_{\ell(t)}\}$ are sorted in the ascending order, where $\ell(t)$ denotes the size of $S_t$, $t \in \{1,\ldots, m\}$. Also, we denote a solution to the set cover problem with the input $\langle\univ,\allsets,k\rangle$ by $\setcovi$ which is encoded as a subset of~$\{1,\ldots,m\}$, where $t\in\setcovi$ if and only if $S_t$ belongs to the $k$-element subset of $\allsets$ forming the set cover. Finally, for simplicity of presentation, we denote a set $\{1,\ldots, l\}$ of indices by $[l]$.

\paragraph{Digraph construction.} For $i \in \cU$, let $\seti{i}$ be the set of all indices of subsets from~$\allsets$ which contain $i$: $I_i=\{j\st i\in S_j\}$. First, for every $i \in \cU$, we introduce the following four sets of vertices (see Figure~\ref{fig:nphard1} for an illustration):
\begin{itemize}
\item $\setu{i} = \left\{ \veruc{i} \right\} \cup \left\{\ \veru{i}{j} :  j \in \seti{i} \right\},$
\item $\setup{i} = \left\{\ \verup{i}{j} :  j \in \seti{i} \right\},$
\item $\setv{i} = \left\{\ \verv{i}{j} :  j \in \seti{i} \right\},$
\item $\setvp{i} = \left\{\ \vervp{i}{j} :  j \in \seti{i} \right\}.$
\end{itemize}
Next, we introduce an additional set $\setz = \{z, z_1, \ldots, z_k\}$ of vertices corresponding to the input integer~$k$. Finally, for every $i \in \cU$ and $j \in \seti{i}$, we create a directed path $\path{i}{j} = (\veru{i}{j}, \veru{i}{}, \verup{i}{j}, \verv{i}{j}, \vervp{i}{j})$; we refer to these paths as \emph{vertical}. Observe that for $i \in \cU$, the union $\bigcup_{j \in \seti{i}} V(\path{i}{j})$ induces the directed subgraph which we shall refer to as the \textit{i-th element component} $\comp{i}$.
This finishes the first step of our construction. 

\begin{figure}[!htb]
\begin{center}
  \centering
  \includegraphics[scale=0.75]{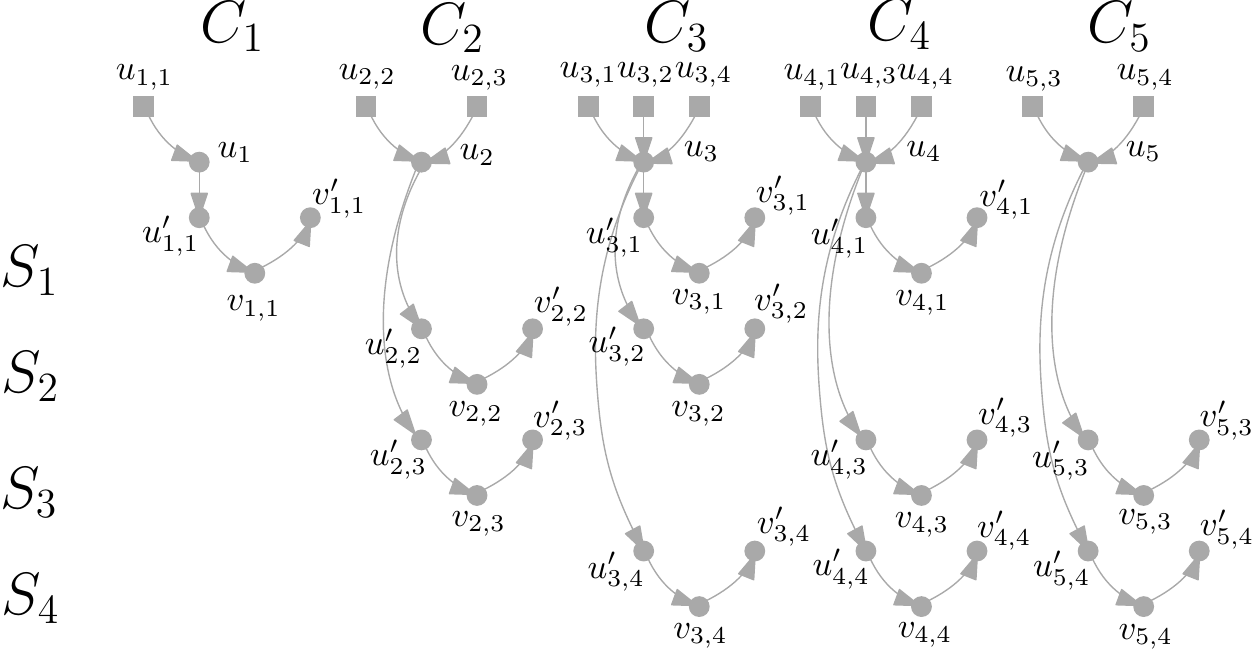}
\caption{Step 1: construction of the element components. Here, $\univ = \{1, 2, 3, 4, 5\}$ and $ \allsets=\{S_1,S_2,S_3,S_4\}$, where $S_1 = \{1, 3, 4\}$, $S_2 = \{2, 3\}$, $\ S_3=\{2, 4, 5\}$ and $S_4 = \{3, 4, 5\}$. The corresponding sets of indices are: $\seti{1}=\{ 1\},\ \seti{2}=\{2,3\},\ \seti{3}=\{1,2,4\},\ \seti{4}=\{1,3,4\}$ and $\seti{5}=\{3,4\}$. For $k=2$, there exists a set cover $\setcovi = \{1, 3\}$ of size $2$, consisting of the sets $S_1$ and $S_3$.}
\label{fig:nphard1}
\end{center}
\end{figure}

For the second step (see Figure~\ref{fig:nphard2} for an illustration), for $t \in [m]$, we consider the set $S_t=\{x_1,\ldots,x_{\ell(t)}\}$ --- recall that the elements in $S_t$ are sorted in the ascending order --- and construct a directed path
$\pathh{t} = (z,\verv{x_{1}}{t},\verv{x_{2}}{t},\ldots,\verv{x_{\ell(t)}}{t})$; these paths are called \emph{horizontal}. Next, we add $k$ additional arcs, namely, for each $l\in[k]$, we add the arc $(z_l,z)$. 

With our 2-step construction, we have built the directed graph $D_{\textup{SC}} = \left( \mathcal{V}, \mathcal{A} \right)$, where
\[
\cV = \bigcup\limits_{i \in [n]} V(\comp{i}) \cup \setz \quad \textrm{and} \quad \cA = \left\{(z_i,z)\st i\in[k]\right\}\cup\bigcup\limits_{i \in [n]} E(\comp{i}) \cup \bigcup\limits_{i \in [m]} E(\pathh{i}).
\]
We finalize our construction by defining the functions $F  \colon \cV \rightarrow \{0,1\}$ and $B \colon \cV \rightarrow \mathbb{N}$. Specifically, we set $F(v)=1$ for each $v \in \cV$ , and $B(v)=1$ if and only if $v \in \cV$ is a~source vertex in $D_{\textup{SC}}$:   
\[B(v) = \begin{cases}
1, & \textup{if }v\in \left\{\veru{i}{j}\st i\in\cU,j\in I_i\right\}\cup\{z_1,\ldots,z_k\} \\
0, & \textup{otherwise}.
\end{cases}
\]
Therefore, $\cF=\mathcal{V}$ and $\cB=B^{-1}(\natplus)$ is the set of all source vertices in $D_{\textup{SC}}$. In  particular, there is exactly one snow plough at each source vertex of $D_{\textup{SC}}$, and so $\mathbf{k}_B = \sum_{v \in \cV}B(v)= |\source{D_{\textup{SC}}}|$, which equals to $k + \sum_{i=1}^m |S_i|$ by the construction of~$D_{\textup{SC}}$. 

Clearly, our reduction takes polynomial time. The order of $D_{\textup{SC}}$ is equal to $4\cdot \sum_{i=1}^m|S_i|+n+k+1=O(nm+k)$, its size also equals $O(nm+k)$, and the descriptions of the functions $F$ and $B$ require $O(nm+k)$ space either.
Finally, observe that $D_{\textup{SC}}$ is acyclic and its underlying graph is connected. The latter observation follows from the fact that $\allsets$ is a~family of sets whose union is $\univ$, and each element in $\univ$ belongs to at least one set from~$\allsets$.

\begin{figure}[!tb]
\begin{center}
  \centering
  \includegraphics[scale=0.75]{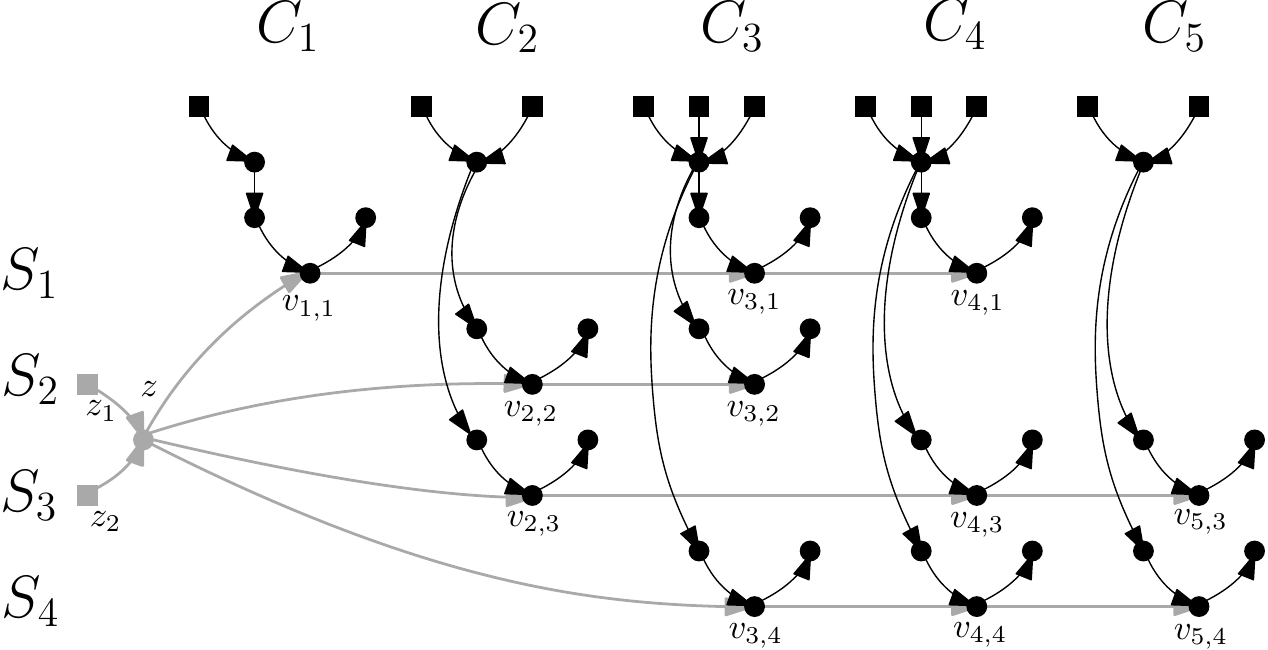}
  \caption{(Cont.\ Figure~\ref{fig:nphard1}) Step 2: four horizontal paths connecting all components and two new source vertices $z_1, z_2$ are added (recall $k=2$).}
\label{fig:nphard2}
\end{center}
\end{figure}

\subsection{Direct implication} 

First, we are going to prove the direct implication.

\begin{lemma}\label{lem:from_sc_to_act}
Let $\langle \univ,\allsets,k\rangle$ be an instance of the \problemSC\ problem. If there exists a set cover of size $k$ for $\univ$ and $\allsets$, then there exists a solution to the \problemCMR\ problem for the digraph $D_{\textup{SC}}=(\cV,\cA,F,B)$.
\end{lemma}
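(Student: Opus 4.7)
My plan is to produce a direct construction: starting from a size-$k$ set cover $\setcovi = \{t_1,\ldots,t_k\}\subseteq[m]$, I would exhibit the $\mathbf{k}_B$ walks required by \problemCMR\ explicitly and then verify vertex coverage and connectivity. The design of $D_{\textup{SC}}$ leaves essentially no freedom at the sources $\veru{i}{j}$ --- the natural directed walk available there descends along the vertical path $\path{i}{j}$ --- while at each source $\verz{l}$ we can, via the arc $(\verz{l},z)$, enter $z$ and then traverse any single horizontal path $\pathh{t}$. The whole point of the gadget is that selecting these $k$ horizontal paths according to the set cover links every element component $\comp{i}$ to $z$.

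Concretely, from each source $\veru{i}{j}$ (with $i\in\univ$, $j\in\seti{i}$) I would launch the walk $\path{i}{j}$, and from each source $\verz{l}$ (with $l\in[k]$) I would launch the walk obtained as the arc $(\verz{l},z)$ concatenated with $\pathh{t_l}$ via the $\circ$ operation. This yields $\sum_{i\in\univ}|\seti{i}|+k=\sum_{t=1}^m|S_t|+k=\mathbf{k}_B$ walks, one starting at each source, so the starting multiplicities prescribed by $B$ are met automatically. I would then check the two conditions required by \problemCMR. For vertex coverage, every vertex of $D_{\textup{SC}}$ lies on one of these walks --- the sources $\veru{i}{j}$ and $\verz{l}$ are starting points, the interior vertices $\veruc{i}, \verup{i}{j}, \verv{i}{j}, \vervp{i}{j}$ of $\comp{i}$ are covered by the vertical walks, and $z$ is covered by any $\verz{l}$-walk --- so $F^{-1}(1)=\cV$ is contained in the vertex set of the induced subgraph $H$.

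For connectivity of the underlying graph of $H$, I would argue in two layers. Within each $\comp{i}$, all vertical paths share the central vertex $\veruc{i}$, so $\comp{i}$ is internally connected in $H$. Across components, each $\verz{l}$-walk visits $\verv{i}{t_l}$ for every $i\in S_{t_l}$, so it links $z$ to every $\comp{i}$ with $i\in S_{t_l}$; since $\setcovi$ is a set cover of $\univ$, every $i\in\univ$ lies in some $S_{t_l}$, so every $\comp{i}$ is joined to $z$ in $H$. Finally, each $\verz{l}$ is joined to $z$ by the arc $(\verz{l},z)$, so the underlying graph of $H$ is connected.

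I do not anticipate a real obstacle here: the reduction was engineered so that the set-cover condition coincides exactly with what is needed for the $k$ selected horizontal paths to reach every element component. The only routine verification --- matching starting multiplicities to $B$ --- is automatic from the construction, since we produce precisely one walk per source vertex of $D_{\textup{SC}}$.
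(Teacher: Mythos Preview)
Your proposal is correct and follows essentially the same approach as the paper: you construct the identical family of walks (the vertical paths $\path{i}{j}$ from the sources $\veru{i}{j}$ and the walks $(\verz{l},z)\circ\pathh{t_l}$ from the sources $\verz{l}$), and your verification of vertex coverage and connectivity mirrors the paper's, with the only cosmetic difference being that you argue connectivity directly (via the shared vertex $\veruc{i}$ and the set-cover property) whereas the paper phrases the cross-component step as a short contradiction.
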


\begin{proof}
(See Figure~\ref{fig:nphard3} for an illustration.) Let $\setcovi=\{\xi(1),\ldots,\xi(k)\}$ be a solution to the \problemSC\ problem for $\langle\univ,\allsets,k\rangle$. We now construct a solution $\cW$ to the \problemCMR\ problem to consist of the following paths:
\[\path{i}{j}=(\veru{i}{j}, \veru{i}{}, \verup{i}{j}, \verv{i}{j}, \vervp{i}{j}), \quad \textrm{ for } i\in\univ,\ j\in\seti{i},\]
and
\[(z_t,z) \circ \pathh{\xi(t)}=(z_t,z,\verv{x_{1}}{\xi(t)},\verv{x_{2}}{\xi(t)},\ldots,\verv{x_{\ell(t)}}{\xi(t)}), \quad \textrm{ for } t\in [k],\]
where $x_1,\ldots,x_{\ell(t)}$ are the (ordered) elements of the set $S_t \in \cS$, with $|S_t|=\ell(t)$. 

\begin{figure}[!tb]
\begin{center}
  \centering
  \includegraphics[scale=0.75]{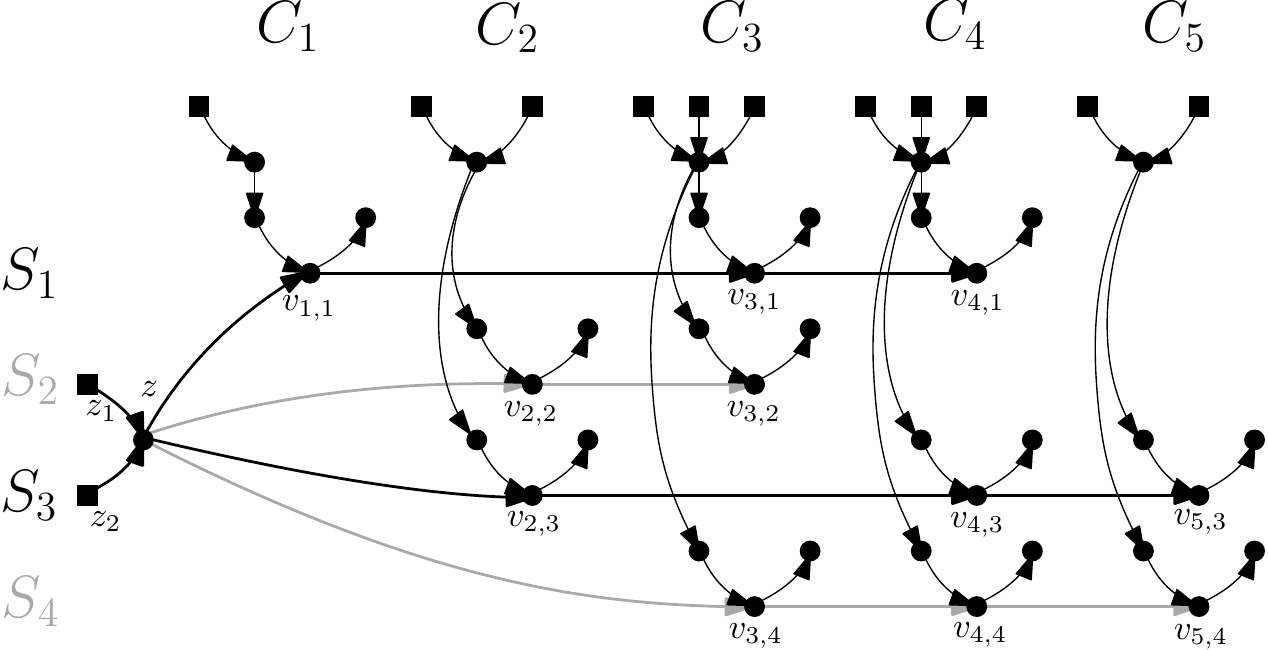}
  \caption{(Cont.\ Figure~\ref{fig:nphard2}) The fact that the set cover contains all elements in $\univ$ guarantees that the corresponding subgraph $H$ of $D$ is connected. For $k=2$, there exists a set cover $\setcovi = \{1, 3\}$ of size $2$, corresponding to sets $S_1$ and $S_3$, respectively.}
\label{fig:nphard3}
\end{center}
\end{figure}

Clearly, by the construction, the set $\cW$ consists of $\mathbf{k}_B=|\source{D_\textrm{SC}}|$ paths that together cover all vertices of $D_{\textup{SC}}$ and each of which starts at a distinct vertex in $\cB=B^{-1}(\natplus)$. Thus, it remains to prove that these paths induce a subgraph $H$ of $D_{\textup{SC}}$ whose underlying graph is connected. First, by the definition of the paths $\path{i}{j}$, observe that each element component induces a subgraph whose underlying graph is connected. Hence, it is enough to argue that for each element component $\comp{i}$, where $i \in \cU$, there exists a directed path in $H$ that connects $z$ with a vertex of $\comp{i}$. Suppose for a contradiction that this is not the case for the $i$-th element component $\comp{i}$, for some $i \in \univ$. That means that no vertex $\verv{i}{j}$, for any $j \in \seti{i}$, is lying on any of the chosen horizontal paths in $H$. But that means, by the choice of the horizontal paths and the fact that $\setcovi$ is a solution to the \problemSC\ problem, that $i$ does not belong to $\bigcup_{j \in \setcovi} S_j$, a contradiction. 
\end{proof}

\subsection{Converse implication} 

In this subsection, to complete the NP-completeness proof of the \problemCMR\ problem, we are going to prove the converse implication, in a sequence of lemmas. Recall that in our digraph $D_{\textup{SC}}=(\cV,\cA,F,B)$, we have 
$$\source{D_{\textup{SC}}}=\{z_1,\ldots, z_k\} \cup \bigcup\limits_{i \in \univ} \{u_{i,j} \st j \in I_i\},$$
and we set $B(v)=1$ for each $v\in\source{D_{\textup{SC}}}$ and $B(v)=0$ otherwise, and so $\mathbf{k}_B=|\source{D_{\textup{SC}}}|$. 
Thus, keeping in mind that $D_{\textup{SC}}$ is acyclic, any solution $\cW$ to the \problemCMR\ problem for $D_{\textup{SC}}$ consists of $\mathbf{k}_B$ paths that start at distinct source vertices in $\source{D_{\textup{SC}}}$; in the following, $\pi(v) \in \cW$ denotes the unique path that starts at a source vertex $v\in \source{D_{\textup{SC}}}$. 

\begin{lemma}\label{lem:opt_ind_ec}
Suppose that the \problemCMR\ problem admits a positive answer for the input graph $D_\textrm{SC}$. Then, there exists a solution $\{\pi(v)\st v\in\source{D_\textrm{SC}}\}$ to the \problemCMR\ problem for $D_\textrm{SC}$ such that for each $i\in\univ$ and $j\in\seti{i}$, $V(\pi(\veru{i}{j}))\subseteq V(\comp{i})$.
\end{lemma}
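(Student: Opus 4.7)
The plan is to construct a valid \problemCMR\ solution $\{\pi'(v)\}$ from the given $\{\pi(v)\}$ in three phases, the last of which rests on a structural observation about $\pathh{j}$.

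Phase 1 (permutation swap at $\veruc{i}$). For each $i \in \univ$, the walks $\{\pi(\veru{i}{j}) : j \in \seti{i}\}$ must collectively cover every arc $(\veruc{i}, \verup{i}{j'})$ for $j' \in \seti{i}$ (otherwise some $\verup{i}{j'}$ would be missed, since $(\veruc{i}, \verup{i}{j'})$ is the unique in-arc at $\verup{i}{j'}$), so the assignment $\sigma_i : j \mapsto j'$ defined by ``$\pi(\veru{i}{j})$ uses $(\veruc{i}, \verup{i}{j'})$'' is a permutation of $\seti{i}$. Swap tails at $\veruc{i}$: let $\pi^{(1)}(\veru{i}{j})$ be $(\veru{i}{j}, \veruc{i})$ concatenated with the tail of $\pi(\veru{i}{\sigma_i^{-1}(j)})$ from $\veruc{i}$ onward. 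The union of arcs is unchanged, so the solution remains valid, and each $\pi^{(1)}(\veru{i}{j})$ now traverses at least the prefix $(\veru{i}{j}, \veruc{i}, \verup{i}{j}, \verv{i}{j})$ of $\path{i}{j}$.

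Phases 2 and 3 (truncate $u$-walks, fix $z$-walks). Replace $\pi^{(1)}(\veru{i}{j})$ by $\pi'(\veru{i}{j}) := \path{i}{j}$, the full vertical path ending at $\vervp{i}{j}$. This immediately yields $V(\pi'(\veru{i}{j})) \subseteq V(\comp{i})$ and covers every vertex of every $\comp{i}$. Then set $\pi'(\verz{\ell}) := (\verz{\ell}, z) \circ \pathh{t_\ell}$, traversing the full horizontal path whenever $\pi(\verz{\ell})$ originally extended onto some $\pathh{t_\ell}$, and $\pi'(\verz{\ell}) := (\verz{\ell}, z)$ otherwise. The resulting $H'$ covers all vertices, and its connectivity reduces to showing that $T^* := \{t_\ell : \pi(\verz{\ell}) \textup{ extends onto } \pathh{t_\ell}\}$ covers $\univ$: if so, each $\comp{i}$ has $\verv{i}{t} \in V(\pathh{t})$ for some $t \in T^*$, and the fully-covered $\pathh{t}$ connects $\comp{i}$ to $z$ in $H'$.

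The main obstacle is proving $T^*$ is a set cover. Suppose for contradiction that some $i \notin \bigcup_{t \in T^*} S_t$; then every $j \in \seti{i}$ lies outside $T^*$, so no $\pi(\verz{\ell})$ uses any arc of $\pathh{j}$. Fix such $j \in \seti{i}$. The only walks that can cover the $|S_j|$ sinks $\{\vervp{a}{j} : a \in S_j\}$ (each reachable only via the unique arc $(\verv{a}{j}, \vervp{a}{j})$) are the $|S_j|$ walks $\pi(\veru{a}{\sigma_a^{-1}(j)})$ for $a \in S_j$; each such walk enters $\verv{a}{j}$ via $\path{a}{j}$ and can only move forward on $\pathh{j}$ thereafter, so if it ends at a sink it ends at some $\vervp{a'}{j}$ with $a' \ge a$. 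Covering all $|S_j|$ sinks then requires each of these walks to end at a distinct sink, giving a bijection $a \mapsto a'$ of $S_j$ with $a' \ge a$; induction on the minimum element forces $a' = a$ throughout, so none of these walks extends past its own $\vervp{a}{j}$ and therefore no arc of $\pathh{j}$ is used by any walk. Since this holds for every $j \in \seti{i}$, $\comp{i}$ has no horizontal neighbour in $H$, contradicting the connectivity of $H$ and completing the proof.
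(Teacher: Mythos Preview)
Your proof is correct, but it takes a genuinely different route from the paper's and in fact proves considerably more than the lemma asks for.

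The paper argues locally: given a solution with some offending walk $\pi(\veru{i}{j})$ that escapes $\comp{i}$ along a horizontal arc $(\verv{i}{j'},\verv{i'}{j'})$, it finds a second walk $\pi(v)$ terminating at $\vervp{i}{j'}$ and swaps tails at $\verv{i}{j'}$. A short counting argument (the $|I_i|$ walks from $\setu{i}\setminus\{u_i\}$ must hit the $|I_i|$ vertices of $\setup{i}$ bijectively, hence meet only at $\veruc{i}$) shows that $\pi(v)$ was not itself confined to a single component, so the swap never creates a new violation; iterating finishes. Lemma~\ref{lem:opt_ec} and Lemma~\ref{lem:from_act_to_st} are then separate, further normalisation steps.

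Your argument is global: you jump straight to the canonical solution $\pi'(\veru{i}{j})=\path{i}{j}$, $\pi'(z_\ell)=(z_\ell,z)\circ\pathh{t_\ell}$, and reduce validity of $\pi'$ to showing that $T^*$ (the indices of horizontal paths touched by the original $z$-walks) already covers $\univ$. The sink-counting argument---each of the $|S_j|$ sinks $\vervp{a}{j}$ must be hit by one of the $|S_j|$ walks $\pi(\veru{a}{\sigma_a^{-1}(j)})$, and the monotone bijection $a\mapsto a'$ with $a'\ge a$ forces the identity---is the real content, and it is sound. This single step simultaneously establishes the present lemma, Lemma~\ref{lem:opt_ec}, and essentially Lemma~\ref{lem:from_act_to_st}, at the cost of a more delicate combinatorial argument than the paper's repeated local fixes. (Note that your Phase~1 swap is not actually needed for the construction of $\pi'$; what you use from Phase~1 is only the observation that each $\sigma_a$ is a permutation, so that $\sigma_a^{-1}(j)$ is well defined.)
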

\begin{proof}
Let $\cW$ be a solution to the \problemCMR\ problem for $D_\textrm{SC}$ and assume that in $\cW$, for some $i\in \univ$ and $j \in \seti{i}$, we have $\pi(\veru{i}{j})=(\veru{i}{j},\veruc{i},\verup{i}{j'},\verv{i}{j'},\verv{i'}{j'})\circ P$ for some (possibly empty) path $P$, that is, the arc $(\verv{i}{j'},\verv{i'}{j'})\in A(\pi(\veru{i}{j}))$ and so $\pi(\veru{i}{j})$  `leaves' $C_i$ at $\verv{i}{j}$ by visiting vertex $\verv{i'}{j'}$, for some $i'>i$. We will argue that we may obtain another solution to the \problemCMR\ problem in which $V(\pi(\veru{i}{j}))\subseteq V(\comp{i})$ as required by the lemma. The idea is to modify two paths in $\cW$ maintaining the following invariant: each path that is a subgraph of an element component remains a subgraph of this element component. Thus, since this modification can be repeated for any $i \in \univ$ and $j \in \seti{i}$, this is sufficient to prove our claim.

Since $\cF=F^{-1}(1)=\cV$, we have $\vervp{i}{j'}\in\cF$ and hence there exists another path $\pi(v) \in \cW $ for some $v\in\source{D_\textup{SC}}$ such that $\pi(v)=P' \circ (\verv{i}{j'},\vervp{i}{j'})$, for some path $P'$ in $D_\textup{SC}$. 
We then modify the set $\cW$ of paths by removing the two paths $\pi(\veru{i}{j})$ and $\pi(v)$, and then adding the following two paths: $(\veru{i}{j},\veruc{i},\verup{i}{j'},\verv{i}{j'},\vervp{i}{j'})$ and $P' \circ (\verv{i}{j'},\verv{i'}{j}) \circ P$. 

Observe that any two paths that start at vertices in $\setu{i}\setminus\{u_i\}$ have only the vertex $\veruc{i}$ in common --- this is due to the fact that the vertices in $\setup{i}$ are only reachable with paths that start at vertices in $\setu{i}\setminus\{u_i\}$ and $\card{\setup{i}}=\card{\setu{i}\setminus\{u_i\}}=\card{B^{-1}(1)\cap V(\comp{i})}$. Consequently, the vertex set $V(\pi(v))$ of the original path $\pi(v)$ is not a subset of a single element component, and hence, after the modification, each path whose vertex set is a subset of an element component keeps this property as required. Clearly, all vertices of $D$ are covered in the new solution and, since the two new paths also share the vertex $\verv{i}{j'}$, the modified set of paths also induces a connected spanning subgraph of the underlying graph of $D$.
\end{proof}

\begin{lemma}\label{lem:opt_ec}
Suppose that the \problemCMR\ problem admits a positive answer for the input graph $D_\textrm{SC}$. Then, there exists a solution $\{\pi(v)\st v\in\source{D_\textup{SC}}\}$ to the \problemCMR\ problem for $D_\textrm{SC}$ such that:
\begin{itemize}
\item[$a)$] for each $i \in \cU$ and $j\in  \seti{i}$, we have $\pi(\veru{i}{j})=\path{i}{j}$;
\item[$b)$] for each $t\in [k]$, we have $\pi(z_t)=(z_{t},z) \circ \pathh{l}$ for some $l\in [m]$.
\end{itemize}
\end{lemma}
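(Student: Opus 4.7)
My plan is to refine the solution granted by Lemma~\ref{lem:opt_ind_ec} in two stages: first restructure each $\pi(\veru{i}{j})$ inside its element component to produce~(a), then extend each $\pi(\verz{t})$ to a full horizontal path to produce~(b). Throughout, the acyclicity of $D_\mathrm{SC}$ turns every walk into a simple path, and $\cF=\cV$ forces every vertex to be covered by $H$.

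For~(a), fix $i\in\cU$ and consider the $\card{\seti{i}}$ paths $\pi(\veru{i}{j})$, $j\in\seti{i}$, which by Lemma~\ref{lem:opt_ind_ec} are contained in $\comp{i}$. The $3\card{\seti{i}}+1$ vertices $\{\veru{i}{j},\verup{i}{j},\vervp{i}{j}\st j\in\seti{i}\}\cup\{\veruc{i}\}$ have all their in-neighbours inside $\comp{i}$, whereas the only sources of $\comp{i}$ are the $\veru{i}{j}$'s themselves. Because $\veruc{i}$ is the only out-neighbour of each $\veru{i}{j}$ and any path leaving $\veruc{i}$ through some $\verup{i}{j'}$ can only continue along $\verup{i}{j'}\to\verv{i}{j'}\to\vervp{i}{j'}$ (the alternative out-arc of $\verv{i}{j'}$ would leave $\comp{i}$), the requirement to cover all of the forced vertices pins each $\pi(\veru{i}{j})$ to the form $(\veru{i}{j},\veruc{i},\verup{i}{\sigma(j)},\verv{i}{\sigma(j)},\vervp{i}{\sigma(j)})$ for some permutation $\sigma$ of $\seti{i}$. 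The multiset of arcs used inside $\comp{i}$ then equals $\bigcup_{j\in\seti{i}}A(\path{i}{j})$ irrespective of $\sigma$, so I can re-route at $\veruc{i}$ (assigning to $\pi(\veru{i}{j})$ the suffix previously carried by $\pi(\veru{i}{\sigma^{-1}(j)})$) to replace every $\pi(\veru{i}{j})$ with $\path{i}{j}$ without altering $H$ or any walk outside $\comp{i}$.

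For~(b), starting from a solution that already satisfies~(a), fix $t\in[k]$. The unique out-arc of $\verz{t}$ is $(\verz{t},z)$; from $z$ the only out-arcs go to the first vertex of some $\pathh{l}$; and, once a walk has entered $\pathh{l}$, the only way to leave it is a single vertical step from some $\verv{i}{l}$ to the sink $\vervp{i}{l}$. Consequently $\pi(\verz{t})$ is a prefix of $(\verz{t},z)\circ\pathh{l}$ for some $l\in[m]$, possibly terminated by one vertical arc into some $\vervp{i}{l}$. Set $l(t)$ to that $l$ (or arbitrarily in $[m]$ if $\pi(\verz{t})$ never entered a horizontal path) and replace $\pi(\verz{t})$ with $(\verz{t},z)\circ\pathh{l(t)}$. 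This preserves vertex coverage: every old vertex of $\pi(\verz{t})$ lies on the new path except possibly a single $\vervp{i}{l(t)}$, which is already covered by $\pi(\veru{i}{l(t)})=\path{i}{l(t)}$ from~(a); and it only enlarges the arc set of $H$, so the underlying graph remains connected. I expect the main obstacle to be the coverage/forcing argument in~(a) that rules out short or misrouted paths and isolates the permutation $\sigma$; once this is in hand, the swap at $\veruc{i}$ and the `extend to a full $\pathh{l(t)}$' step in~(b) are local modifications whose correctness is immediate.
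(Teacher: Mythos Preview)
Your approach is the paper's: normalise the $\pi(\veru{i}{j})$ to $\path{i}{j}$ by swapping suffixes at $\veruc{i}$, then extend each $\pi(\verz{t})$ to a full $(\verz{t},z)\circ\pathh{l}$, noting that any stray vertical arc into a $\vervp{i}{l}$ is already covered by part~(a). The only cosmetic difference is that you do the swap in~(a) globally via a permutation $\sigma$ while the paper fixes one inconsistent pair at a time.

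There is, however, a small gap in your forcing argument for~(a). You assert that coverage of the $3\card{\seti{i}}+1$ ``forced'' vertices pins each $\pi(\veru{i}{j})$ to the full five-vertex form. This is not quite right for the $\vervp{i}{j}$'s: although the unique in-neighbour $\verv{i}{j}$ of $\vervp{i}{j}$ lies in $\comp{i}$, the vertex $\verv{i}{j}$ itself is reachable from \emph{outside} $\comp{i}$ along a horizontal path, so some $\pi(\verz{t})$ may cover $\vervp{i}{j}$ via a final vertical step, allowing $\pi(\veru{i}{\sigma^{-1}(j)})$ to stop short at $\verup{i}{j}$ or $\verv{i}{j}$. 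Your pigeonhole on the $\verup{i}{j}$'s is sound and does give the permutation $\sigma$, but it only forces the prefix $(\veru{i}{j},\veruc{i},\verup{i}{\sigma(j)})$.

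The fix is immediate (and the paper's own proof glosses over the same point): once Lemma~\ref{lem:opt_ind_ec} confines each $\pi(\veru{i}{j})$ to $\comp{i}$, first extend it to a maximal path in $\comp{i}$; this only adds arcs to $H$ and yields the full form. Alternatively, observe directly that after your re-route the new paths $\path{i}{j}$ together cover \emph{all} arcs of $\comp{i}$, which is a superset of whatever the old confined paths covered; hence $H$ can only grow and connectivity is preserved. So your phrase ``without altering $H$'' should read ``without shrinking $H$''. With that adjustment the argument is complete.
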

Before we proceed with the proof of Lemma~\ref{lem:opt_ec}, note that Property (a) implies that the paths $\pi(\veru{i}{j})$, where $i\in\univ$ and $j\in\seti{i}$, visit all and only  vertices of all components $\comp{i}$, $i\in\univ$. Therefore, since these components have no vertices in common, the underlying graph becomes connected only thanks to the paths $\pi(v)$, $v\in\setz\setminus\{z\}$, which Property (b) refers to.
\begin{proof}
(a) Consider any solution, say $\cW$, to the \problemCMR\ problem for $D_{\textup{SC}}$ and assume that in $\cW$, for some $i \in \univ$ and $j\in\seti{i}$, we have $\pi(\veru{i}{j})\neq\path{i}{j}$; we shall refer to $\pi(\veru{i}{j})$ as well as to any other path $\pi(\veru{i'}{j'})$ such that $\pi(\veru{i'}{j'}) \neq \path{i'}{j'}$ as {\em inconsistent}. By Lemma~\ref{lem:opt_ind_ec}, $\pi(\veru{i}{j})\subseteq V(\comp{i})$ and hence there exists $j'\in\seti{i}\setminus\{j\}$ such that $\pi(\veru{i}{j})=(\veru{i}{j},\veruc{i},\verup{i}{j'},\verv{i}{j'},\vervp{i}{j'})$.
Then, again by Lemma~\ref{lem:opt_ind_ec} and the fact that each vertex in $\setup{i}$ (in particular $\verup{i}{j}$) has to belong to some path $\pi(v) \in \cW$, $v\in\setu{i}\setminus\{u\}$, there exists in $\cW$ another inconsistent path $\pi(\veru{i}{j''})=(\veru{i}{j''},\veruc{i},\verup{i}{j},\verv{i}{j},\vervp{i}{j})$ for some $j''\in\seti{i}\setminus\{j\}$. Now, we modify the solution by substituting $\pi(\veru{i}{j}):=\path{i}{j}$ and $\pi(\veru{i}{j''}):=(\veru{i}{j''},\veruc{i},\verup{i}{j'},\verv{i}{j'},\vervp{i}{j'})$. Clearly, the two new paths still share vertex $u_i$ and cover exactly the same vertices as the two original ones. Therefore, the modified set of paths is also a solution to the \problemCMR\ problem, moreover, with less number of inconsistent paths.  

By repeating the above replacement argument a finite number of times, if ever needed, we obtain the desired solution satisfying Property (a).
\\[2mm]
(b) Consider any solution, say $\cW$, to the \problemCMR\ problem for $D_{\textup{SC}}$ satisfying already proved Property~(a). Consider any $t \in [k]$. If $\pi(z_t) \in \cW$ ends at a vertex of some horizontal path $\pathh{l}$ for some $l \in [m]$ and $\pi(z_t) \neq (z_t,z) \circ \pathh{l}$, then we just extend $\pi(z_t)$ to have $\pi(z_t)=(z_t,z) \circ \pathh{l}$. Otherwise, by the construction of $D_{\textup{SC}}$, we must have $\pi(z_t)=(z_t, z) \circ P\circ (\verv{i}{l},\vervp{i}{l})$ for some $i\in\univ$, where $P$ is a subpath of $\pathh{l}$ for some $l \in [m]$. By the choice of $\cW$, the arc $(\verv{i}{l},\vervp{i}{l})$ belongs to the (consistent) path $\pi(\veru{i}{j})=\path{i}{j}$ for some $j \in I_i$, and hence the path $\pi(z_t)$ can be replaced by: $\pi(z_t):=(z_t,z) \circ \pathh{l}$. Since $P$ is a subpath of the new path $\pi(z_t)$, we conclude that the new set of paths is also a solution to the \problemCMR\ problem for $D_{\textup{SC}}$, and moreover, it maintains the property that $\pi(\veru{i}{j})=\path{i}{j}$ for each $i \in \univ$ and $j \in I_i$. 

Therefore, by repeating the above replacement argument a finite number of times, if ever needed, we obtain the desired solution satisfying both Properties (a) and (b) 
\end{proof}    
 
Now, we are going to prove our final lemma.
\begin{lemma}\label{lem:from_act_to_st}
If there is a solution to the \problemCMR\ problem for $D_{\textup{SC}}=(\cV,\cA,F,B)$, then there exists a set cover of size $k$ for the set system $(\univ,\allsets)$. 
\end{lemma}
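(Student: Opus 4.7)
The plan is to apply Lemma~\ref{lem:opt_ec} to reduce to a canonical solution, and then read off a set cover directly from the indices of the horizontal paths that occur in that solution.

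First, I would invoke Lemma~\ref{lem:opt_ec} to assume without loss of generality that a solution $\cW$ to the \problemCMR\ problem on $D_{\textup{SC}}$ has the following structure: for every $i\in\univ$ and $j\in\seti{i}$, $\pi(\veru{i}{j})=\path{i}{j}$, and for every $t\in[k]$, $\pi(z_t)=(z_t,z)\circ\pathh{l_t}$ for some $l_t\in[m]$. The vertical portion $\{\pi(\veru{i}{j})\st i\in\univ,\, j\in\seti{i}\}$ covers precisely $\bigcup_{i\in\univ}V(\comp{i})$, and each such path is contained inside a single element component. Consequently, the vertical portion alone forms exactly $n$ pairwise vertex-disjoint connected pieces in the underlying graph, one per component $\comp{i}$.

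Next, I would analyze how the remaining $k$ paths $\pi(z_t)$ merge these pieces. A horizontal path $\pathh{l_t}$ intersects the component $\comp{i}$ precisely at the vertex $\verv{i}{l_t}$, which occurs if and only if $i\in S_{l_t}$. Moreover the vertices $z$ and $z_1,\dots,z_k$ are linked into a single piece by the arcs $(z_t,z)$. Since $\cW$ is a solution to \problemCMR, the underlying graph of $\bigcup\cW$ is connected, so every component $\comp{i}$ must be joined to the piece containing $\setz$ by at least one horizontal path. Therefore $\bigcup_{t=1}^{k}S_{l_t}=\univ$, and the (multi)set $\{S_{l_1},\dots,S_{l_k}\}$ yields a set cover of $\univ$ of size at most $k$. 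To deliver a cover of size exactly $k$ in the sense of the paper's definition, I would pad $\{l_1,\dots,l_k\}$ with arbitrary distinct indices from $[m]\setminus\{l_1,\dots,l_k\}$; this is possible since we may assume $m\ge k$ (otherwise no $k$-element subfamily of $\allsets$ exists and the \problemSC\ instance is trivially negative, making the implication vacuous after a trivial preprocessing step).

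The heavy lifting has already been done by Lemma~\ref{lem:opt_ec}, which forces the canonical form; the only substantive step here is the connectivity argument, namely that inter-component connectivity can be supplied \emph{only} by the horizontal paths rooted at $z$. I do not foresee a real obstacle, only the bookkeeping of translating "horizontal path $\pathh{l_t}$ meets $\comp{i}$" into "$i\in S_{l_t}$" and the minor padding step from "at most $k$" to "exactly $k$".
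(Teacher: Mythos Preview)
Your proposal is correct and follows essentially the same approach as the paper: invoke Lemma~\ref{lem:opt_ec} to obtain a canonical solution, then use connectivity of the underlying graph to argue that every component $\comp{i}$ must be reached from $z$ via some horizontal path $\pathh{l_t}$, giving $i\in S_{l_t}$ and hence a set cover. The paper's proof is terser---it neither spells out the disjointness of the vertical pieces nor the padding from ``at most $k$'' to ``exactly $k$''---but the argument is identical.
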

\begin{proof}
By Lemma~\ref{lem:opt_ec}, any solution to the \problemCMR\ problem can be modified to be composed of the following paths: $\pi(\veru{i}{j})=\path{i}{j}$ for each $i\in \univ$ and $j\in\seti{i}$, and $\pi(z_t)=(z_t,z) \circ \pathh{\xi(t)}$ for each $t\in[k]$, where $\xi(t)\in [m]$. Now, we claim that the set $\setcovi = \{\xi(1),\ldots,\xi(k)\}$ is a set cover solution for the instance $\langle\univ,\allsets,k\rangle$.
Indeed, since our solution to the \problemCMR\ problem is valid, for each $i\in\univ$ there exists $t\in [k]$ such that $\verv{i}{j}$ is a vertex of $\pi(z_t)$, since otherwise, in the underlying simple graph induced by our solution, no vertex in $\comp{i}$ is connected by a path to the vertex $z$. Thus, $i\in S_{\xi(t)}$ which completes the proof.
\end{proof}

Note that the \problemCMR\ problem is clearly in NP and, as already observed, the construction of $D_{\textup{SC}}$ is polynomial in the input size to the \problemSC\ problem. Hence, by combining Lemmas~\ref{lem:from_sc_to_act} and~\ref{lem:from_act_to_st}, we obtain the following result.

\begin{theorem} \label{thm:CMRhard}
The \problemCMR\ problem is strongly NP-complete even for directed acyclic graphs $D=(\cV,\cA,F,B)$ with $F^{-1}(1)=\cV$ and $B(v)=1$ if $v$ is a source vertex in $D$ and $B(v)=0$ otherwise.
\qed
\end{theorem}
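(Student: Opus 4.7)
The plan is to assemble the theorem directly from the reduction and the two directions established in the preceding lemmas. First I would argue that the \problemCMR\ problem lies in NP: a certificate is a list of $\mathbf{k}_B$ directed walks, each of length at most $m$, together with their starting vertices; verifying that the walks respect $B$, cover every vertex in $F^{-1}(1)$ within one connected component of the underlying graph of their union, and induce a valid subgraph of $D$ is clearly polynomial. Since the walks in a yes-instance can always be truncated to simple walks of polynomially bounded length, the certificate is of polynomial size even when the encoding of $B$ is succinct.

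Next I would verify that the digraph $D_{\textup{SC}}=(\cV,\cA,F,B)$ produced from an instance $\langle\univ,\allsets,k\rangle$ of \problemSC\ already satisfies all side-conditions declared in the theorem: the construction builds $D_{\textup{SC}}$ only from vertical paths $\path{i}{j}$ and horizontal paths $\pathh{t}$, together with arcs $(z_t,z)$ — so there are no directed cycles and $D_{\textup{SC}}$ is a DAG. By design $F(v)=1$ for every vertex, $B(v)=1$ exactly for the source vertices $\{\veru{i}{j}\}\cup\{z_1,\ldots,z_k\}$, and $B(v)=0$ elsewhere, matching the restricted form in the theorem statement. I would also recall that the size and description of $D_{\textup{SC}}$ are polynomial in $n$, $m$, and $k$, so the reduction runs in polynomial time and all numeric weights are bounded (yielding \emph{strong} NP-hardness rather than merely weak).

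For the equivalence, I would invoke Lemma~\ref{lem:from_sc_to_act} to get the forward direction (a $k$-element set cover yields explicit paths along each $\path{i}{j}$ plus the $k$ horizontal paths $(z_t,z)\circ\pathh{\xi(t)}$, whose underlying union is connected because every element component $\comp{i}$ is touched by at least one chosen $\pathh{\xi(t)}$) and Lemma~\ref{lem:from_act_to_st} for the converse (via the normal-form Lemmas~\ref{lem:opt_ind_ec} and~\ref{lem:opt_ec}, any solution can be rearranged so that the $k$ paths from $z_1,\ldots,z_k$ select $k$ horizontal paths whose associated sets cover $\univ$). Chaining these two equivalences with the polynomial-time reduction from \problemSC, together with NP membership, yields that \problemCMR\ is NP-complete under the very restricted inputs described in the statement; the only step requiring real work has been done in the preceding lemmas, and the theorem is then immediate.

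No step constitutes a genuine obstacle given the existing lemmas — the main subtlety is merely pointing out that all parameters of $D_{\textup{SC}}$ are polynomially bounded, which justifies the word \emph{strongly} in the conclusion.
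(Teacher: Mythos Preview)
Your proposal is correct and mirrors the paper's own argument: the paper simply notes that \problemCMR\ is clearly in NP, that the construction of $D_{\textup{SC}}$ is polynomial in the size of the \problemSC\ instance, and then combines Lemmas~\ref{lem:from_sc_to_act} and~\ref{lem:from_act_to_st} to conclude. Your write-up adds a bit more detail (an explicit NP certificate, the check that $D_{\textup{SC}}$ meets the stated restrictions, and the justification for \emph{strong} NP-completeness via polynomially bounded numbers), but the route is identical.
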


\paragraph{No pre-specified positions of snow ploughs.} 
We claim that the Snow Team problem with Unspecified snow team bases is also NP-complete. The reduction is exactly the same as for the \problemCMR\ problem. All we need is to observe that if facilities are located at all vertices of the input digraph, then the number of snow ploughs sufficient to solve the \problemCMRk\ problem is bounded from below by the number of source vertices in the digraph, since there must be at least one snow plough at each of its source vertices. Furthermore, without loss of generality we may assume that in any feasible solution of $k$ walks, all snow ploughs are initially located at source vertices. Since in the digraph $D_{\textup{SC}}$ constructed for the proof of Theorem~\ref{thm:CMRhard}, we have $B(v)=1$ if $v$ is a source vertex in $D_{\textup{SC}}$ and $B(v)=0$ otherwise,  we may conclude with the following corollary. 
 
\begin{corollary} \label{cor:CMRUhard}
The \problemCMRk\ problem is strongly NP-complete even for directed acyclic graphs $D=(\cV,\cA,F)$ with $F^{-1}(1)=\cV$ and $k$ being equal the number of source vertices in $D$.
\qed
\end{corollary}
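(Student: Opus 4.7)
The plan is to reuse the very digraph $D_{\textup{SC}}=(\cV,\cA,F,B)$ constructed in the proof of Theorem~\ref{thm:CMRhard}, simply dropping the vertex-weight function $B$. Given a \problemSC\ instance $\langle\univ,\allsets,k\rangle$, I would build $D_{\textup{SC}}$ exactly as in Section~\ref{sec:theCMRproblemNP}, discard $B$, and take as the \problemCMRk\ input the triple $(D_{\textup{SC}},F,k')$ where $F\equiv 1$ and $k'=\card{\source{D_{\textup{SC}}}}=k+\sum_{t=1}^{m}\card{S_t}$. Membership in NP is immediate (a witness is the list of $k'$ walks, verifiable in polynomial time), and the reduction is polynomial. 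Since $D_{\textup{SC}}$ is acyclic with $F^{-1}(1)=\cV$ and $k'$ equals the number of source vertices, the stated restriction of Corollary~\ref{cor:CMRUhard} is preserved. Strong NP-completeness follows because the construction uses only unit-size integer data, so the reduction is valid under unary encoding.

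For the forward direction I would invoke Lemma~\ref{lem:from_sc_to_act}: a set cover of size $k$ yields a solution to \problemCMR\ on $(D_{\textup{SC}},F,B)$ consisting of exactly $\mathbf{k}_B=k'$ directed walks whose underlying union is connected and visits $F^{-1}(1)=\cV$, and this same collection of walks immediately certifies a positive answer to \problemCMRk\ with parameter~$k'$.

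For the converse I would exploit a pigeonhole argument that forces the missing $B$ function to be reconstructed from the structural properties of $D_{\textup{SC}}$. Let $\cW$ be any feasible collection of $k'$ walks witnessing a positive answer to \problemCMRk. Because $D_{\textup{SC}}$ is acyclic, every vertex in $\source{D_{\textup{SC}}}$ has in-degree zero, hence can only appear in a walk that \emph{starts} at it. Since $F^{-1}(1)=\cV$, each source vertex must be covered by some walk in $\cW$, so each of the $k'=\card{\source{D_{\textup{SC}}}}$ sources is a starting vertex of at least one walk; as $\card{\cW}=k'$, these starting vertices are pairwise distinct and coincide with $\source{D_{\textup{SC}}}$. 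Therefore $\cW$ already respects the plough-quantity profile $B(v)=1$ on sources and $B(v)=0$ elsewhere used in Section~\ref{sec:theCMRproblemNP}, so $\cW$ is also a valid \problemCMR\ solution for $(D_{\textup{SC}},F,B)$; applying Lemma~\ref{lem:from_act_to_st} recovers a set cover of size~$k$.

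The only step that could appear delicate is the converse: one might worry that, without the pre-specified bases, a solver could cheat by starting multiple walks at the same well-placed internal vertex, thereby escaping the source-wise budget that drives the proof of Lemma~\ref{lem:from_act_to_st}. The in-degree-zero argument above rules this out entirely, because every source must host a distinct walk and $\card{\cW}$ is pinned down to the exact number of sources; no walk budget remains for starts at non-sources. Once this observation is in place, the rest is a straightforward invocation of the lemmas already proved for Theorem~\ref{thm:CMRhard}.
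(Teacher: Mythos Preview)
Your proposal is correct and follows essentially the same approach as the paper: reuse $D_{\textup{SC}}$, observe that every source vertex (having in-degree zero in an acyclic digraph) must host the start of a walk, and with $k'$ equal to the number of sources this pigeonhole forces the walks to respect exactly the plough-quantity function $B$ from Theorem~\ref{thm:CMRhard}, so Lemmas~\ref{lem:from_sc_to_act} and~\ref{lem:from_act_to_st} apply verbatim. The paper's justification is terser but makes the identical observation.
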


Since by setting $F(v)=1$ for each vertex $v$ of the input digraph, the \problemCMRk\ problem becomes just the Agent Clearing Tree problem (\problemACT) studied in~\cite{ClearingConnectionsByFewAgents}, 
we immediately obtain the following corollary resolving the open problem of the complexity status of \problemACT\ posed in~\cite{ClearingConnectionsByFewAgents}.

\begin{corollary} \label{cor:ACThard}
The \problemACT\ problem is NP-complete.
\qed
\end{corollary}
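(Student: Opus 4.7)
The plan is to derive this essentially for free from Corollary~\ref{cor:CMRUhard}, because \problemACT\ coincides with the specialization of \problemCMRk\ in which every vertex is a facility. First I would verify that \problemACT\ lies in NP: a certificate consists of the initial placements of the $k$ agents together with their directed walks; one checks in polynomial time that each walk is a valid directed walk of $D$, that distinct agents correspond to distinct walks, that the underlying graph of the union of the walk-edges is connected, and that every vertex of $D$ appears on some walk. This is linear in the total walk length, which can be polynomially bounded without loss of generality.

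For NP-hardness, I would appeal directly to Corollary~\ref{cor:CMRUhard}. That corollary states that \problemCMRk\ remains strongly NP-complete on instances $(D,F,k)$ with $F^{-1}(1)=\cV$. In such an instance the requirement on the $k$ walks is that the edges they traverse induce a subgraph $H$ whose vertex set contains $F^{-1}(1)=\cV$ and whose underlying graph is connected. A connected subgraph spanning all vertices of $\cV$ is, by definition, one whose underlying graph contains a spanning tree of the underlying graph of $D$. Conversely, if the edges traversed by $k$ agents induce such a spanning-tree-containing subgraph, then the vertex set covers $\cV$ and the underlying graph is connected, giving a positive answer to \problemCMRk. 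Hence the decision version of \problemACT\ with parameter $k$ is literally the same problem as \problemCMRk\ restricted to $F\equiv 1$, and the hard instances produced in the proof of Corollary~\ref{cor:CMRUhard} are already \problemACT\ instances.

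The only point requiring care is that \problemACT\ was originally formulated in~\cite{ClearingConnectionsByFewAgents} as a minimization problem; the result above is really the NP-completeness of its natural decision version ``do $k$ agents suffice?'', which is the standard way to phrase NP-completeness for an optimization problem, and which immediately yields NP-hardness of the optimization version itself. No additional construction or gadget is needed beyond the digraph $D_{\textup{SC}}$ already built in Section~\ref{sec:theCMRproblemNP}. I do not anticipate a real obstacle: the entire content of the corollary is the observation that the reduction from \problemSC\ already produces instances with $F^{-1}(1)=\cV$, so the same reduction witnesses NP-hardness of \problemACT.
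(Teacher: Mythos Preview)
Your proposal is correct and follows exactly the paper's own reasoning: the paper also derives the corollary immediately from Corollary~\ref{cor:CMRUhard} by observing that \problemCMRk\ with $F^{-1}(1)=\cV$ is precisely the \problemACT\ problem. Your added remarks on NP-membership and the decision-versus-optimization formulation are fine elaborations but not strictly needed, since the paper treats these as understood.
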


\section{Open problem} \label{sec:conclusions}
In all of our variants of the Snow Team problem, we assumed that a snow plough can traverse arbitrary number of arcs. However, from a practical point of view, it is more natural to assume that each snow plough, called an {\em $s$-plough}, can traverse and clear only the fixed number $s$ of arcs~\cite{Ntafos1984}. Observe that in this case, the key Lemma~\ref{lem:reduction-transitive} does not hold, which immediately makes our algebraic approach unfeasible for the Snow Team problem with $s$-ploughs, so this variant requires further studies.

\section*{Acknowledgement}
This research has been partially supported by National Science Centre (Poland) grant number 2015/17/B/ST6/01887.

\bibliographystyle{plain}

\bibliography{bibliography}

\end{document}